\documentclass[11pt]{article}

\usepackage{times}
\usepackage{fullpage}
\usepackage{amsfonts,amssymb,amsmath,xspace}
\usepackage{latexsym}
\usepackage{url}
\usepackage{verbatim}
\usepackage[pdftex]{color}
\usepackage[breaklinks]{hyperref}

%%% Basic notation

\newcommand{\R}{\mathbb{R}}
\newcommand{\eps}{\varepsilon}

\newcommand{\floor}[1]{\lfloor #1 \rfloor}
\newcommand{\ceil}[1]{\left\lceil #1 \right\rceil}

\newcommand{\braket}[2]{\langle #1|#2 \rangle}

\newcommand{\ket}[1]{| #1 \rangle}

\newcommand{\fro}[1]{\|#1\|_F}
\newcommand{\trn}[1]{\|#1\|_{tr}}

\def\01{\{0,1\}}

\newcommand{\Tr}{\mathrm{Tr}}

\newcommand{\DISJ}{\mathrm{DISJ}}

\newcommand{\IP}{\mathrm{IP}}
\newcommand{\diag}{\mathrm{diag}}

\newcommand{\ignore}[1]{}

\newtheorem{theorem}{Theorem}
\newtheorem{lemma}[theorem]{Lemma}
\newtheorem{corollary}[theorem]{Corollary}
\newtheorem{proposition}[theorem]{Proposition}
\newtheorem{definition}[theorem]{Definition}

\newtheorem{example}[theorem]{Example}

\newtheorem{fact}[theorem]{Fact}

\newcommand{\thmref}[1]{\hyperref[#1]{{Theorem~\ref*{#1}}}}
\newcommand{\lemref}[1]{\hyperref[#1]{{Lemma~\ref*{#1}}}}
\newcommand{\corref}[1]{\hyperref[#1]{{Corollary~\ref*{#1}}}}
\newcommand{\eqnref}[1]{\hyperref[#1]{{Equation~(\ref*{#1})}}}
\newcommand{\claimref}[1]{\hyperref[#1]{{Claim~\ref*{#1}}}}
\newcommand{\remarkref}[1]{\hyperref[#1]{{Remark~\ref*{#1}}}}
\newcommand{\propref}[1]{\hyperref[#1]{{Proposition~\ref*{#1}}}}
\newcommand{\factref}[1]{\hyperref[#1]{{Fact~\ref*{#1}}}}
\newcommand{\defref}[1]{\hyperref[#1]{{Definition~\ref*{#1}}}}
\newcommand{\exampleref}[1]{\hyperref[#1]{{Example~\ref*{#1}}}}
\newcommand{\hypref}[1]{\hyperref[#1]{{Hypothesis~\ref*{#1}}}}
\newcommand{\secref}[1]{\hyperref[#1]{{Section~\ref*{#1}}}}
\newcommand{\chapref}[1]{\hyperref[#1]{{Chapter~\ref*{#1}}}}
\newcommand{\apref}[1]{\hyperref[#1]{{Appendix~\ref*{#1}}}}
\newcommand\rank{\mbox{\tt {rank}}\xspace}
\newcommand\prank{\mbox{\tt {rank}$_{\tt psd}$}\xspace}
\newcommand\rprank{\mbox{\tt {rank}$^{\R}_{\tt psd}$}\xspace}
\newcommand\alice{\mbox{\sf Alice}\xspace}
\newcommand\bob{\mbox{\sf Bob}\xspace}

\newenvironment{proof}[1][Proof: ]
{\noindent {\bf #1}}
{{\hfill $\Box$}\\
 \smallskip}

\begin{document}

\title{Some upper and lower bounds on PSD-rank}
\author{Troy Lee\thanks{School of Physics and Mathematical Sciences, Nanyang Technological University and Centre for Quantum Technologies, Singapore. Email:troyjlee@gmail.com} \\
\and
Zhaohui Wei\thanks{School of Physics and Mathematical Sciences, Nanyang Technological University and Centre for Quantum Technologies, Singapore. Email:weizhaohui@gmail.com} \\
\and Ronald de Wolf\thanks{CWI and University of Amsterdam,
Amsterdam, The Netherlands. Email:rdewolf@cwi.nl} }
\date{}
\maketitle

\begin{abstract}
Positive semidefinite rank (PSD-rank) is a relatively new quantity
with applications to combinatorial optimization and communication
complexity. We first study several basic properties of PSD-rank, and
then develop new techniques for showing lower bounds on the
PSD-rank.  All of these bounds are based on viewing a positive
semidefinite factorization of a matrix $M$ as a quantum
communication protocol. These lower bounds depend on the entries of
the matrix and not only on its support (the zero/nonzero pattern),
overcoming a limitation of some previous techniques. We compare
these new lower bounds with known bounds, and give examples where
the new ones are better. As an application we determine the PSD-rank
of (approximations of) some common matrices.
\end{abstract}

\section{Introduction}

\subsection{Background}

We study the properties of \emph{positive semidefinite factorizations}.
Such a factorization (of size $r$) of a nonnegative $m$-by-$n$
matrix $A$ is given by $r$-by-$r$ positive semidefinite
matrices $E_1, \ldots, E_m$ and $F_1,\ldots, F_n$ satisfying
$A(i,j)=\Tr(E_i F_j)$.  The \emph{positive semidefinite rank} (PSD-rank) of
$A$ is the smallest $r$ such that $A$ has a positive semidefinite
factorization of size~$r$. We denote it by $\prank(A)$.  The notion of PSD-rank has been
introduced relatively recently because of applications to
combinatorial optimization and communication complexity
\cite{GPT11,FMP+12}. These applications closely parallel those of the
\emph{nonnegative rank} of~$A$, which is the minimal number of rank-one nonnegative matrices that sum to~$A$.

In the context of combinatorial optimization, a polytope $P$ is associated with a nonnegative matrix known as
the \emph{slack matrix} of $P$.  A classic result of Yannakakis~\cite{Yannakakis91} shows that the nonnegative rank
of the slack matrix of $P$ characterizes the size of a natural way of formulating the optimization of a linear function
over $P$ as a linear program.  More precisely, the nonnegative rank of the slack matrix of~$P$ equals the
\emph{linear extended formulation} size of $P$, which is the minimum number of facets of a (higher-dimensional) polytope~$Q$ that projects to~$P$.
Analogously, the PSD-rank of the slack matrix of $P$ captures the size of a natural way of
optimizing a linear function over $P$ as a \emph{semidefinite} program \cite{GPT11,FMP+12}.  More precisely, the PSD-rank
of the slack matrix of $P$ is equal to the \emph{positive semidefinite extension} size of $P$, which is the smallest
$r$ for which $P$ can be expressed as the projection of an affine slice of the cone of
$r$-dimensional positive semidefinite matrices.

There have recently been great strides in understanding linear extended formulations, showing that the linear
extended formulation size for the traveling salesman and matching polytopes is exponentially large in the number of vertices of the underlying graph~\cite{FMP+12,Rot14}.
It is similarly conjectured that the traveling salesman polytope requires superpolynomial \emph{positive semidefinite
extension complexity}, and proving this requires showing lower bounds on the PSD-rank of the corresponding slack matrix.

In communication complexity, nonnegative and PSD-rank arise in the model of computing a function
$f : \01^m \times \01^n \rightarrow \R_+$ in expectation.  In this model, \alice has an input $x \in
\01^m$, \bob has an input $y \in \01^n$ and their goal is to communicate in order for \bob to output a nonnegative
random variable whose expectation is $f(x,y)$.  The associated communication matrix for this problem is a
$2^m$-by-$2^n$ matrix whose $(x,y)$ entry is $f(x,y)$.  The nonnegative rank of the communication matrix of $f$
characterizes the amount of classical communication needed to compute $f$ in expectation \cite{CFF+12}.
Analogously, the PSD-rank of the communication matrix of $f$ characterizes the amount of \emph{quantum}
communication needed to compute $f$ in expectation~\cite{FMP+12}.
Alternatively, one can consider the problem where
\alice and \bob wish to generate a probability distribution $P(x,y)$
using shared randomness or shared entanglement, but without
communication.  The number of bits of shared randomness or qubits of
shared entanglement are again characterized by the nonnegative rank
and PSD-rank, respectively \cite{Zha12,JSWZ12}.

Accordingly, providing lower and upper bounds on the PSD-rank is interesting in the context of communication complexity
as well.  Here we will pin down, up to constant factors, the PSD-rank of some common matrices
studied in communication complexity like inner product and non-equality.

\subsection{Our results}

As PSD-rank is a relatively new quantity, even some basic questions about its
behavior remain unanswered.  We address several properties here.
First we show that, unlike the usual rank, PSD-rank is not strictly
multiplicative under tensor product: we give an example of a matrix
$P$ where $\prank(P \otimes P) < \prank(P)^2$.  We do this by making a connection between PSD-rank and
planar geometry to give a simple sufficient condition for when the PSD-rank is not full.

The second question we address is the dependence of PSD-rank on the underlying field.  At the Dagstuhl
Seminar 13082 (February 2013), Dirk Oliver Theis raised the question if the PSD-rank where the factorization is by \emph{real} symmetric
PSD-matrices is the same as that by \emph{complex} Hermitian PSD-matrices.  It is easy to see that the real PSD-rank
can be at most a factor of~2 larger than the complex PSD-rank; we give an infinite family of matrices where the
real PSD-rank is asymptotically a factor of~$\sqrt{2}$ larger than the complex PSD-rank.

Our main goal in this paper is showing lower bounds on the
PSD-rank, a task of great importance to both the applications to
combinatorial optimization and communication complexity mentioned
above. Unfortunately, at this point very few techniques exist to
lower bound the PSD-rank.

One lower bound direction is to consider
only the \emph{support} of the matrix, that is the pattern of zero/nonzero entries.  For the
nonnegative rank, this method can show good lower bounds---in
particular, support-based arguments sufficed to show exponential
lower bounds on the linear extension complexity of the traveling
salesman polytope \cite{FMP+12}.
For the PSD-rank, however, support-based arguments cannot show lower
bounds larger than the rank of the matrix \cite{LeeTheis12}.  This
means that for cases like the traveling salesman polytope, where we
believe the positive semidefinite extension complexity is
superpolynomial in the rank of the slack matrix, other techniques
need to be developed.

We develop three easy-to-compute lower bounds on PSD-rank.  All
three depend on the values of the matrix and not only on its support
structure---in particular, they can show nontrivial lower bounds for
matrices without zero entries.  All three are derived from the
viewpoint of PSD-rank of a nonnegative matrix as a quantum
communication protocol. We compare these lower bounds with previous
techniques and show examples where they are better.

We also give nearly tight bounds on the PSD-rank of (approximations of) the identity matrix and on the PSD-rank of the matrix corresponding to the inner product and nonequality functions.

\section{Preliminaries}

Let $M=[M(i,j)]$ be an arbitrary $m$-by-$n$ matrix of rank~$r$ with
the $(i,j)$-th entry being $M(i,j)$, and let
$\sigma_1,\sigma_2,\ldots,\sigma_r$ be the nonzero singular values
of~$M$. The \emph{trace norm} of~$M$ is defined as $\trn{M}=\sum_i
\sigma_i$, and the \emph{Frobenius norm} of $M$ is defined as
$\fro{M}=(\sum_i \sigma_i^2)^{1/2}$; this equals $(\sum_{i,j}
M(i,j)^2)^{1/2}$. Note that $\fro{M}\leq \trn{M}$. By the
Cauchy-Schwarz inequality we have
\begin{equation}
\label{eq:trace_norm_bound}
\rank(M) \ge \left(\frac{\trn{M}}{\fro{M}}\right)^2
\end{equation}

\subsection{PSD-rank}
Since it is the central topic of this paper, we repeat the definition of PSD-rank from the introduction:

\begin{definition}
Let $A$ be a nonnegative $m$-by-$n$ matrix.
A  \emph{positive semidefinite factorization} of size $r$ of~$A$
is given by $r$-by-$r$ positive semidefinite matrices $E_1, \ldots, E_m$ and $F_1,\ldots, F_n$ satisfying $A(i,j)=\Tr(E_i F_j)$.
The \emph{positive semidefinite rank} (PSD-rank, $\prank(A)$) of
$A$ is the smallest integer~$r$ such that $A$ has a positive semidefinite
factorization of size~$r$.
\end{definition}

Note that for a nonnegative matrix $A$, the PSD-rank is unchanged
when we remove all-zero rows and columns.  Also, for
nonnegative diagonal matrices $D_1, D_2$, the PSD-rank of $D_1 A D_2$
is at most that of $A$. Throughout this paper we will use these
facts to achieve a particular normalization for~$A$.  In particular,
we will frequently assume without loss of generality that each
column of $A$ sums to one, i.e., that $A$ is a stochastic matrix.

The following lemma is very useful for giving upper bounds on the PSD-rank.

\begin{lemma}(\cite{Zha12})\label{lem:1D}
If $A$ is a nonnegative matrix, then
\[
\prank(A) \leq \min_{M:\ M\circ \bar M = A} \rank(M),
\]
where $\circ$ is the Hadamard product (entry-wise product) and $\bar
M$ is the entry-wise complex conjugate of $M$.
\end{lemma}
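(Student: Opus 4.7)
My plan is to exhibit an explicit PSD factorization of $A$ of size $\rank(M)$ whenever $M \circ \bar M = A$; this will immediately give the claimed inequality after minimizing over such $M$.

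Fix any complex matrix $M$ with $M \circ \bar M = A$ and let $r = \rank(M)$. I would start by taking a rank-$r$ decomposition of $M$: there exist vectors $u_1,\ldots,u_m \in \mathbb{C}^r$ and $v_1,\ldots,v_n \in \mathbb{C}^r$ such that
\[
M(i,j) = \langle v_j, u_i \rangle = v_j^* u_i.
\]
(Any rank factorization $M = XY$ with $X \in \mathbb{C}^{m\times r}$, $Y \in \mathbb{C}^{r \times n}$ gives such vectors after a cosmetic conjugation on one side.) The hypothesis $M \circ \bar M = A$ now reads $A(i,j) = |M(i,j)|^2 = |\langle v_j, u_i\rangle|^2$, so the PSD factorization we need is essentially forced by Born's rule.

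Next, I would set
\[
E_i = u_i u_i^*, \qquad F_j = v_j v_j^*,
\]
which are rank-one Hermitian PSD matrices in $\mathbb{C}^{r\times r}$. A direct computation using the cyclicity of the trace gives
\[
\Tr(E_i F_j) = \Tr(u_i u_i^* v_j v_j^*) = (v_j^* u_i)(u_i^* v_j) = |\langle v_j, u_i\rangle|^2 = A(i,j),
\]
which is exactly a PSD factorization of $A$ of size $r$. Hence $\prank(A) \leq r = \rank(M)$, and minimizing over all admissible $M$ yields the lemma.

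There is essentially no obstacle here: the only thing to be slightly careful about is matching complex conjugates between the rank decomposition of $M$ and the outer-product form of the $E_i, F_j$ so that $\Tr(E_i F_j)$ lands on $|M(i,j)|^2$ rather than on, say, $M(i,j)^2$. This is settled by the cosmetic choice of which side of the decomposition to conjugate. A minor remark worth recording is that the resulting factors $E_i, F_j$ are rank one, so the lemma actually produces a factorization by rank-one PSD matrices, which is the form most naturally associated to a one-round quantum protocol.
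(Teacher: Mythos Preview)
Your argument is correct. The paper, however, does not supply its own proof of this lemma: it is quoted as a known result from \cite{Zha12} and used as a black box (for instance in \thmref{thm:not_full}). So there is no ``paper's proof'' to compare against. That said, the construction you give---take a rank-$r$ factorization $M(i,j)=v_j^* u_i$ and set $E_i=u_i u_i^*$, $F_j=v_j v_j^*$---is exactly the standard one, and your bookkeeping with the conjugation is right: $\Tr(E_iF_j)=(v_j^*u_i)(u_i^*v_j)=|M(i,j)|^2=A(i,j)$. Your closing remark that the resulting PSD factors are rank one is also worth keeping, since it matches the one-way quantum protocol interpretation the paper uses elsewhere.
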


In the definition of PSD-rank, we allow the matrices of the
PSD-factorization to be arbitrary Hermitian PSD matrices, with complex-valued entries.
One can also consider the \emph{real} PSD-rank, where the matrices of the factorization are
restricted to be real symmetric PSD matrices.  For a nonnegative matrix $A$,
we denote its real PSD-rank by $\rprank(A)$.

We now review some existing lower bound methods for the PSD-rank.
Firstly, it is well known that the PSD-rank cannot be much smaller than the normal rank $\rank(A)$ of~$A$.

\begin{definition}
For a nonnegative matrix~$A$, define
$$
B_1(A)=\sqrt{\rank(A)}\mbox{ and }
B'_1(A)=\frac{1}{2}\left(\sqrt{1+8\rank(A)}-1\right).
$$
\end{definition}

\begin{fact}(\cite{GPT11})\label{ft:trivial}
$\prank(A)\geq B_1(A)$ and $\rprank(A)\geq B'_1(A)$.
\end{fact}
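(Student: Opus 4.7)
The plan is to view the positive semidefinite factorization as realizing $A$ as a Gram matrix with respect to the Hilbert--Schmidt (trace) inner product on the real vector space $\cS_r$ of $r\times r$ Hermitian matrices (or $r\times r$ real symmetric matrices in the real case). Given a factorization $A(i,j)=\Tr(E_iF_j)$ of size $r$, the bound will come from counting the real dimension of the ambient space $\cS_r$, since that dimension upper bounds the ordinary rank of $A$.

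The steps in order would be as follows. First, fix an orthonormal basis $\{B_1,\ldots,B_d\}$ of $\cS_r$ with respect to the Hilbert--Schmidt inner product $\langle X,Y\rangle=\Tr(XY)$. Since the $E_i$ and $F_j$ are Hermitian, we may expand $E_i=\sum_{k=1}^d e_{ik}B_k$ and $F_j=\sum_{k=1}^d f_{jk}B_k$ with real coefficients. Substituting into the factorization gives
\[
A(i,j)=\Tr(E_iF_j)=\sum_{k=1}^d e_{ik}f_{jk},
\]
so $A=XY^T$ where $X\in\R^{m\times d}$ has rows $(e_{i1},\ldots,e_{id})$ and $Y\in\R^{n\times d}$ has rows $(f_{j1},\ldots,f_{jd})$. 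In particular $\rank(A)\le d$.

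Second, compute $d=\dim_\R \cS_r$ in each case. For complex Hermitian $r\times r$ matrices, the $r$ diagonal entries are real and each of the $\binom{r}{2}$ strictly upper-triangular entries contributes two real parameters, giving $d=r+2\binom{r}{2}=r^2$. For real symmetric $r\times r$ matrices, $d=r+\binom{r}{2}=r(r+1)/2$. Applying the bound $\rank(A)\le d$ then yields $r^2\ge \rank(A)$ in the complex case and $r(r+1)/2\ge \rank(A)$ in the real case. Solving the quadratic in the real case gives $r\ge \tfrac{1}{2}(\sqrt{1+8\rank(A)}-1)$, which is exactly $B'_1(A)$, while the complex case gives $r\ge\sqrt{\rank(A)}=B_1(A)$.

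There is no real obstacle here; the only point that needs care is the dimension count, and in particular recognizing that although the $E_i,F_j$ may have complex entries, their Hilbert--Schmidt inner product is real and they live in a $r^2$-dimensional real space, so the factorization pushes down to a real rank factorization of $A$ of inner dimension $r^2$ (respectively $r(r+1)/2$).
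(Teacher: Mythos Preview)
Your argument is correct. The key observation---that a size-$r$ PSD factorization expresses $A$ as a Gram matrix in the real inner-product space of $r\times r$ Hermitian (respectively, real symmetric) matrices, whence $\rank(A)$ is bounded by the real dimension of that space---is exactly the standard proof of this bound, and your dimension counts and the resulting inequalities are right.

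Note, however, that the paper does not actually give its own proof of this fact: it is simply stated with a citation to \cite{GPT11}. So there is no in-paper argument to compare against. Your write-up is a clean self-contained version of the standard argument and would serve well as a proof if one were wanted.
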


This bound does not look very powerful since, as stated in
the introduction, usually our goal is to show lower bounds on the
PSD-rank that are superpolynomial in the rank. Surprisingly,
however, this bound can be nearly tight and we give two examples in
\secref{sec:B1tight} where this is the case.

Jain et al.~\cite{JSWZ12} proved that quantum communication
needed for two separated players to generate a joint probability
distribution $P$ is completely characterized by the logarithm of the
PSD-rank of $P$. Combining this result and Holevo's bound, a trivial
lower bound for PSD-rank is given by mutual information.
\begin{definition}
Let $P = [P(i,j)]_{i,j}$ be a two-dimensional probability
distribution between two players $A$ and $B$. Define $B_2(P)=2^{H(A:B)}$,
where $H(A:B)$ is the mutual information between the two players.
\end{definition}

\begin{fact}\label{ft:mutual}
$\prank(P)\geq B_2(P)$.
\end{fact}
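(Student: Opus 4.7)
The plan is to convert a PSD factorization of $P$ into a one-way quantum protocol that samples from $P$, then apply Holevo's bound to translate the dimension of the transmitted state into a bound on mutual information.

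Given a PSD factorization $P(i,j)=\Tr(E_i F_j)$ of size~$r$, I would first normalize it so that Alice's matrices form a POVM. Let $E=\sum_i E_i$ (assumed invertible by restricting to its support; this does not change the PSD-rank). Set $\tilde E_i = E^{-1/2} E_i E^{-1/2}$ and $\tilde F_j = E^{1/2} F_j E^{1/2}$. A quick check shows $\sum_i \tilde E_i = I$, $\Tr(\tilde E_i \tilde F_j)=P(i,j)$, and $\Tr(\tilde F_j)=\sum_i P(i,j)=p_B(j)$, where $p_B$ is Bob's marginal. Hence $\rho_j:=\tilde F_j/p_B(j)$ is an $r$-dimensional density matrix.

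This yields the natural protocol: Bob samples $J\sim p_B$, prepares $\rho_J$, and transmits it to Alice using $\log r$ qubits; Alice measures with the POVM $\{\tilde E_i\}$ and outputs $I$. A one-line computation shows $\Pr[I=i, J=j]=p_B(j)\Tr(\tilde E_i \rho_j)=P(i,j)$, so the pair $(I,J)$ is distributed according to~$P$. Now apply Holevo's bound to the ensemble $\{p_B(j),\rho_j\}$: the mutual information $H(I:J)$ between the measurement outcome and the classical label of the state is at most the Holevo quantity $\chi(\{p_B(j),\rho_j\})\le \log r$, since the $\rho_j$ live on an $r$-dimensional Hilbert space. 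Since the joint distribution of $(I,J)$ is exactly $P$, this is precisely $H(A:B)\le \log \prank(P)$, which rearranges to $\prank(P)\ge 2^{H(A:B)}=B_2(P)$.

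The only real subtlety is the non-invertibility of $E$ when some directions are unused by the factorization, but this is handled by passing to the support of $E$ (equivalently, by a $\eps I$-perturbation and limiting argument), which preserves the relation $\Tr(E_i F_j)=P(i,j)$ and does not increase the factorization size. Everything else is a direct assembly of standard ingredients---the equivalence between PSD-rank and one-way quantum communication for generating distributions (\cite{JSWZ12}) and Holevo's bound---so no heavy lifting is required beyond the normalization step.
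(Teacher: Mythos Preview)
Your proposal is correct and follows essentially the same approach the paper indicates: the paper does not give a proof but simply remarks that the bound follows by combining the characterization of PSD-rank via quantum communication for generating distributions \cite{JSWZ12} with Holevo's bound, and your argument is exactly a clean, self-contained instantiation of that outline (your normalization step is the analogue of the paper's \lemref{lemma:physical}).
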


As an application of this lower bound, it is easy to see that the
PSD-rank of a diagonal nonnegative matrix is the same as its normal rank.

The only result we are aware of showing lower bounds
on PSD-rank asymptotically larger than the rank is a very general result
of Gouveia et al.~\cite{GPT11} that shows the following.
\begin{fact}(\cite{GPT11})
Let $P \subseteq \R^d$ be a polytope with $f$ facets and let $S_P$
be its associated slack matrix. Let $T=\sqrt{\log(f)/d}$.  Then
\[
\prank(S_P) =\Omega \left (\frac{T}{\sqrt{\log(T)}}\right )
\]
\end{fact}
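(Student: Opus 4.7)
The plan is to bound the number of facets $f$ of any polytope admitting a small positive semidefinite extension, and then to invert this bound.

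Set $r = \prank(S_P)$. The equivalence between the PSD-rank of a slack matrix and positive semidefinite extension complexity allows us to write $P = \pi(Q)$, where $Q = S^r_+ \cap L$ is a spectrahedron (the intersection of the cone of real symmetric $r \times r$ PSD matrices with an affine subspace $L$) and $\pi \colon S^r \to \R^d$ is a linear projection. The problem reduces to bounding $f$ in terms of $r$ and $d$ alone.

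First I would set up the facial correspondence: every facet of $P$ is the image under $\pi$ of a proper exposed face of the spectrahedron $Q$, and such faces are cut out by support hyperplanes of the form $\{X : \Tr(CX) = 0\}$ for some $C \succeq 0$. Second, I would use that $Q$ is defined by polynomial inequalities of degree at most $r$ in the $\binom{r+1}{2}$ coordinates of $S^r$ (namely, by its principal minors together with the affine equations defining $L$), so $P$ is the projection of a semi-algebraic set cut out by polynomials of degree $r$ in $O(r^2)$ variables. A Warren/Milnor--Thom bound on the number of sign patterns of such a polynomial system, combined with the observation that facets of $P$ can only arise from appropriately high-dimensional strata of the face lattice of $Q$, should then yield
\[
\log f \;=\; O\bigl(d\, r^2 \log r\bigr),
\]
with the factor $r^2$ reflecting the ambient dimension $\binom{r+1}{2}$ of $S^r$ and the factor $\log r$ reflecting the degree of the defining polynomials.

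Substituting this into $T = \sqrt{\log(f)/d}$ yields $T^2 = O(r^2 \log r)$. Since $\log r = \Theta(\log T)$ whenever $T$ is polynomially related to $r$, rearranging gives the claimed $r = \Omega(T/\sqrt{\log T})$.

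The main obstacle is the semi-algebraic counting step. Unlike spectrahedra themselves, whose facial structure is controlled directly by the Grassmannian of kernel subspaces of $\R^r$, spectrahedral shadows can have a much more intricate facial structure, and proving the $\log f = O(d r^2 \log r)$ estimate requires a careful application of Warren's theorem that simultaneously accounts for the polynomial degree $r$ and the ambient dimension $r^2$ of $S^r$. The logarithmic factor in this estimate is exactly what produces the $\sqrt{\log T}$ loss in the final bound; eliminating it would require a fundamentally different argument.
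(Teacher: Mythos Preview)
The paper does not prove this statement at all: it is quoted as a \emph{fact} from \cite{GPT11} and no argument is given, so there is no ``paper's own proof'' for you to match. Your outline is, however, precisely the strategy used in \cite{GPT11}: translate a size-$r$ PSD factorization of the slack matrix into a spectrahedral lift $P=\pi(Q)$ with $Q\subseteq S^r_+\cap L$, pull facets of $P$ back to faces of $Q$, and bound the number of such faces using that $Q$ is a semi-algebraic set in $\binom{r+1}{2}$ real variables cut out by degree-$r$ minors. The target inequality $\log f = O(d\,r^2\log r)$ is exactly equivalent to the stated $\Omega(T/\sqrt{\log T})$ bound, and you have reverse-engineered it correctly.

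The one substantive gap---which you already flag---is that the face count is asserted rather than executed. It is worth saying explicitly why the step is delicate: a spectrahedron may have infinitely many faces (already the $3\times 3$ elliptope has a continuum of extreme points), so a naive ``count the faces of $Q$'' argument is vacuous. What makes the count finite in \cite{GPT11} is the additional observation that the preimage in $Q$ of a facet of $P$ must have affine dimension at least $d-1$ (since it surjects onto the facet), and that for each fixed rank $k$ the set $\{X\in L:\rank X=k\}$ has a number of connected components bounded by a Milnor--Thom/Warren-type estimate in terms of $r$ and $\dim L$. Combining the dimension constraint with the component count is what produces the $dr^2\log r$ exponent. Your plan points at the right theorem but does not isolate this dimension constraint, without which the argument does not go through.
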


In particular, this shows that the slack matrix of a regular $n$-gon
in $\R^2$, which has $n$ facets and rank~3, has PSD-rank $\Omega(\sqrt{\log n/\log \log n})$.
The nonnegative rank of this matrix is known to be $\Theta(\log n)$~\cite{BentalNemirovski01}.

\subsection{Quantum background}\label{ssecquantum}
A \emph{quantum state} $\rho$ is a positive semidefinite matrix with trace $\Tr(\rho)=1$.
A \emph{POVM} (``Positive Operator Valued Measure'') ${\cal E}=\{E_m\}$ consists of positive semidefinite matrices $E_m$ that sum to the identity.
When measuring a quantum state $\rho$ with this POVM, the outcome is $m$ with probability $p_m=\Tr(\rho E_m)$.

For our purposes, a \emph{(one-way) quantum protocol} between two players Alice (with input~$x$) and Bob (with input~$y$) is the following: Alice sends a quantum state $\rho_x$ to Bob, who measures it with a POVM ${\cal E}_y=\{E_m\}$. Each outcome $m$ of this POVM is associated with a nonnegative value, which is Bob's output.
We say the protocol \emph{computes an $m$-by-$n$ matrix~$M$ in expectation} if, for every $x\in[m]$ and $y\in[n]$, the expected value of Bob's output equals~$M(x,y)$. Fiorini et al.~\cite{FMP+12} showed that the minimal dimension of the states $\rho_x$ in such a protocol is either $\prank(M)$ or $\prank(M)+1$, so the minimal number of qubits of communication is essentially $\log \prank(M)$.

For two quantum states $\rho$ and $\sigma$, we definite the \emph{fidelity} between them by
\[
F(\rho,\sigma)=\trn{\sqrt{\sigma}\sqrt{\rho}}.
\]
See \cite[Chapter~9]{NC00} for additional properties and equivalent formulations of the fidelity.
The fidelity between two probability distributions $p,q$ is $F(\diag(p), \diag(q))$.

The following two facts about fidelity will be useful for us.
\begin{fact}\label{ft:fidelity}
If $\sigma,\rho$ are quantum states, then $\Tr(\sigma \rho) \le F(\sigma,\rho)^2$.
\end{fact}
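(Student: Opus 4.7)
The plan is to set $A = \sqrt{\sigma}\sqrt{\rho}$ and work with its singular values. Using the cyclic property of trace,
\[
\Tr(\sigma\rho) = \Tr(\sqrt{\sigma}\sqrt{\sigma}\sqrt{\rho}\sqrt{\rho}) = \Tr\bigl((\sqrt{\sigma}\sqrt{\rho})(\sqrt{\sigma}\sqrt{\rho})^{*}\bigr) = \fro{A}^2,
\]
using that $\sqrt{\sigma}$ and $\sqrt{\rho}$ are Hermitian. So the quantity on the left is exactly the Frobenius norm squared of $A$, i.e.\ the sum of squared singular values of $A$.

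On the other hand, by definition $F(\sigma,\rho) = \trn{\sqrt{\sigma}\sqrt{\rho}} = \trn{A}$, which is the sum of singular values of $A$. Hence the inequality I want is precisely
\[
\fro{A}^2 \le \trn{A}^2,
\]
which is exactly the relation $\fro{M} \le \trn{M}$ noted in the preliminaries (just squared). Equivalently, for nonnegative reals $s_1, \ldots, s_r$, one has $\sum_i s_i^2 \le (\sum_i s_i)^2$, which is immediate from expanding the right side and dropping the nonnegative cross terms.

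I do not foresee any real obstacle: the only thing to be a bit careful about is verifying that $(\sqrt{\sigma}\sqrt{\rho})^{*} = \sqrt{\rho}\sqrt{\sigma}$ so that $AA^{*} = \sqrt{\sigma}\rho\sqrt{\sigma}$ and the trace equals $\Tr(\sigma\rho)$. This uses only that the square roots of PSD matrices are themselves Hermitian PSD. Once this identification is made, the result is a one-line consequence of the standard norm inequality $\fro{\cdot} \le \trn{\cdot}$ and requires no properties of $\sigma, \rho$ being quantum states beyond positivity of the square roots (the normalization $\Tr(\sigma)=\Tr(\rho)=1$ is in fact not used).
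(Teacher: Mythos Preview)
Your proof is correct and essentially identical to the paper's own proof: both rewrite $\Tr(\sigma\rho)$ as $\fro{\sqrt{\sigma}\sqrt{\rho}}^2$ via the cyclicity of trace and Hermiticity of the square roots, and then apply the inequality $\fro{\cdot}\le\trn{\cdot}$ together with the definition of fidelity. The only difference is that you spell out a bit more of the justification and note that the trace-one normalization is not needed.
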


\begin{proof}
We have $\Tr(\sigma \rho)=\Tr((\sqrt{\sigma}\sqrt{\rho})(\sqrt{\sigma}\sqrt{\rho})^\dag) = \fro{\sqrt{\sigma}\sqrt{\rho}}^2 \le \trn{\sqrt{\sigma}\sqrt{\rho}}^2=F(\sigma,\rho)^2$.
\end{proof}

\begin{fact}(\cite{NC00})\label{ft:fidelClasQuan}
If $\sigma,\rho$ are quantum states, then
\[
F(\sigma,\rho)=\min_{\{E_m\}}F(p,q),
\]
where the minimum is over all POVMs $\{E_m\}$, and $p$ and $q$ are
the probability distributions when $\rho$ and $\sigma$ are measured
by POVM $\{E_m\}$ respectively, i.e., $p_m=\Tr(\rho E_m)$, and
$q_m=\Tr(\sigma E_m)$ for any $m$.
\end{fact}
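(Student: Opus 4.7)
The plan is to prove the equality in two steps: (i) show $F(p,q) \ge F(\sigma,\rho)$ for every POVM $\{E_m\}$, and (ii) exhibit a specific POVM attaining equality.

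For (i), I would start from the variational characterization of the trace norm, $\trn{A} = \max_U |\Tr(UA)|$ (with the maximum over unitaries $U$), which gives $F(\sigma,\rho) = \max_U |\Tr(U\sqrt{\sigma}\sqrt{\rho})|$. For any fixed $U$, inserting the resolution of the identity $\sum_m E_m = I$ yields
\[
\Tr(U\sqrt{\sigma}\sqrt{\rho}) = \sum_m \Tr\bigl((U\sqrt{\sigma}\sqrt{E_m})(\sqrt{E_m}\sqrt{\rho})\bigr).
\]
Applying the Cauchy-Schwarz inequality in the Hilbert-Schmidt inner product to each summand bounds the $m$-th term by $\fro{U\sqrt{\sigma}\sqrt{E_m}} \cdot \fro{\sqrt{E_m}\sqrt{\rho}} = \sqrt{\Tr(\sigma E_m)\,\Tr(\rho E_m)} = \sqrt{q_m p_m}$. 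Summing over $m$ and then maximizing over $U$ gives $F(\sigma,\rho) \le \sum_m \sqrt{p_m q_m} = F(p,q)$.

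For (ii), I would construct a POVM saturating the Cauchy-Schwarz inequalities used above. Assume first that $\rho$ is invertible; the general case follows by a continuity argument applied to $\rho + \eps I$ as $\eps \to 0$. Define the Hermitian operator $M = \rho^{-1/2}\sqrt{\sqrt{\rho}\sigma\sqrt{\rho}}\rho^{-1/2}$ and check, by a short manipulation, that $M\rho M = \sigma$. Diagonalize $M = \sum_m \mu_m |v_m\rangle\langle v_m|$ in an orthonormal eigenbasis and take the rank-one POVM $E_m = |v_m\rangle\langle v_m|$, which automatically sums to $I$. Then $p_m = \langle v_m|\rho|v_m\rangle$ and $q_m = \langle v_m|M\rho M|v_m\rangle = \mu_m^2 p_m$, so $\sqrt{p_m q_m} = \mu_m p_m$ for every $m$, and
\[
F(p,q) = \sum_m \mu_m p_m = \Tr(M\rho) = \Tr\sqrt{\sqrt{\rho}\sigma\sqrt{\rho}} = F(\sigma,\rho).
\]

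The main obstacle is guessing the operator $M$ in step (ii): the right choice is not evident from the Cauchy-Schwarz equality condition alone. Once one notices the identity $M\rho M = \sigma$, however, the verification that the associated rank-one POVM attains the fidelity is a one-line computation. Step (i) is essentially routine given the variational formula for the trace norm.
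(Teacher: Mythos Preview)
The paper does not supply its own proof of this fact; it is quoted from Nielsen and Chuang~\cite{NC00} without argument. Your proposal is correct and is essentially the standard proof found there (Theorem~9.1). One small clarification worth making explicit in step~(ii): the operator $M = \rho^{-1/2}\sqrt{\sqrt{\rho}\sigma\sqrt{\rho}}\,\rho^{-1/2}$ is not merely Hermitian but positive semidefinite (it has the form $A^* B A$ with $B\succeq 0$), so the eigenvalues satisfy $\mu_m\ge 0$ and the step $\sqrt{p_m q_m}=\mu_m p_m$ is justified. The continuity argument for singular $\rho$ is routine, as you indicate.
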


\section{Some properties of PSD-rank}
The PSD-rank is a relatively new quantity, and even some of its
basic properties are still not yet known.  In this section we give a simple
condition for the PSD-rank of a matrix to not be full.
We then use this condition to show that PSD-rank can be strictly
sub-multiplicative under tensor product.  Finally, we investigate
the power of using complex Hermitian over real symmetric
matrices in a PSD factorization.

\subsection{A sufficient condition for PSD-rank to be less than maximal}
We first need a definition and a simple lemma.
Let $v \in \R^m$ be a vector.  We say that an entry $v_k$ is \emph{dominant}
if $|v_k| > \sum_{j \ne k} |v_j|$.
\begin{lemma}
\label{lem:tri}
Suppose that $v \in \R^m$ is nonnegative and has no dominant entries.  Then there exist complex units $e^{i\theta_j}$
such that $\sum_j v_j e^{i \theta_j} =0$.
\end{lemma}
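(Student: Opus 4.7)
The plan is to interpret the statement geometrically. Since $v$ is nonnegative, the non-dominance hypothesis becomes $v_k \leq \sum_{j \ne k} v_j$ for every $k$, which is exactly the generalized polygon inequality. The equivalent claim I would actually prove is: given nonnegative reals $v_1,\ldots,v_m$ satisfying this inequality, there exist unit vectors $u_j = e^{i\theta_j} \in \C$ such that the vectors $v_j u_j$ form a closed planar polygon, i.e.\ $\sum_j v_j u_j = 0$.

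I would proceed by induction on $m$. The cases $m \le 3$ are direct: $m=1$ forces $v_1 = 0$; $m=2$ forces $v_1=v_2$, in which case $\theta_2 = \theta_1 + \pi$ works; $m=3$ is the classical triangle inequality, so $v_1,v_2,v_3$ are the side lengths of a (possibly degenerate) triangle and the directions of its sides, traversed in order around the boundary, supply the required $\theta_j$'s.

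For the inductive step ($m \ge 4$), I would sort so that $v_1 \ge v_2 \ge \cdots \ge v_m$ and merge the two smallest entries into a single new entry $v'$. Any value $v' \in [\,|v_{m-1}-v_m|,\, v_{m-1}+v_m\,]$ is realizable as $|v_{m-1}e^{i\alpha} + v_m e^{i\beta}|$ for appropriate $\alpha,\beta$. So if I can pick such a $v'$ so that the shorter list $(v_1,\ldots,v_{m-2},v')$ still satisfies the generalized polygon inequality, then induction produces angles closing the shorter list, and I can split $v' e^{i\theta'}$ back into $v_{m-1}e^{i\theta_{m-1}} + v_m e^{i\theta_m}$ to finish.

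The main thing to verify, and the step I expect to consume most of the work, is that the allowed interval for $v'$ is nonempty. The non-dominance of $v_i$ for $2 \le i \le m-2$ in the shorter list is automatic because $v_1 \ge v_i$ appears on the right-hand side. The remaining non-dominance conditions require $v_1 - v_2 - \cdots - v_{m-2} \le v' \le v_1 + v_2 + \cdots + v_{m-2}$, and achievability from $v_{m-1},v_m$ requires $|v_{m-1}-v_m| \le v' \le v_{m-1}+v_m$. The original non-dominance of $v_1$ gives $v_1 - v_2 - \cdots - v_{m-2} \le v_{m-1}+v_m$; and $|v_{m-1}-v_m| \le v_{m-1} \le v_1 \le v_1 + \cdots + v_{m-2}$ (using $m \ge 4$, so the last sum has at least two terms). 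The two remaining cross-inequalities are trivial, so the interval is nonempty and the induction goes through.
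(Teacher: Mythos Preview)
Your argument is correct. The base cases match the paper's, but for $m>3$ the paper does not induct: it partitions the sorted list into three blocks $v_1$, $\sum_{j=2}^k v_j$, $\sum_{j=k+1}^m v_j$ (choosing $k$ so that these three numbers mutually satisfy the triangle inequality) and applies the $m=3$ case once, assigning a common angle to all entries in each block. Your route instead merges the two smallest entries into a single entry $v'$, checks that the shortened list still satisfies the generalized polygon inequality, and inducts. The paper's reduction is a one-shot grouping argument whose only work is exhibiting the index~$k$; your induction is perhaps more mechanical, with the work concentrated in the interval-intersection check you carried out. Both arguments are elementary and of comparable length; yours has the mild bonus that it makes explicit why any $v'$ in the interval $[\,|v_{m-1}-v_m|,\,v_{m-1}+v_m\,]$ is attainable and how to recover $\theta_{m-1},\theta_m$ from $\theta'$, while the paper leaves the analogous ``split back'' step implicit.
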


\begin{proof}
Let $v \in \R^m$.  If $m=1$ then $v$ has a dominant entry and there is nothing to prove.
If $m=2$ and $v$ has no dominant entries, then $v_1 = v_2$ and the lemma holds as $v_1 - v_2 =0$.

The first interesting case is $m=3$.  That $v$ has no dominant entries means there is a triangle with side lengths
$v_1, v_2, v_3$, as these satisfy the triangle inequality with respect to all permutations.  Letting
$v_1 e^{\rm{i}\theta_1}, v_2 e^{\rm{i}\theta_2}, v_3 e^{\rm{i}\theta_3}$ be the vectors (oriented head to tail) defining the sides of this triangle gives
$v_1 e^{\rm{i}\theta_1} + v_2 e^{\rm{i}\theta_2}+v_3 e^{\rm{i}\theta_3}=0$ as desired.

We can reduce the case $m>3$ to the case $m=3$.  Without loss of generality, order $v$ such that
$v_1 \ge v_2 \ge \cdots \ge v_m$.  Choose $k$ such that
\[
\sum_{j=k+1}^m v_j \le \sum_{j=2}^k v_j \le \sum_{j=k+1}^m v_j + v_1.
\]
Then $v_1, \sum_{j=2}^k v_j, \sum_{j=k+1}^m v_j$ mutually satisfy the triangle inequality and we can repeat the
construction from the case $m=3$ with these lengths.
\end{proof}

Using the construction of \lemref{lem:1D}, we can give a simple condition for $A$ not to have full PSD-rank.

\begin{theorem}
\label{thm:not_full}
Let $A$ be an $m$-by-$n$ nonnegative matrix, and $A'$ be the
entry-wise square root of $A$ (so $A'$ is nonnegative as well). If every
column of $A'$ has no dominant entry, then the PSD-rank of $A$ is less than~$m$.
\end{theorem}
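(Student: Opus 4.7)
The plan is to apply \lemref{lem:1D} by constructing a complex matrix $M$ with $M\circ\bar M=A$ whose rank is strictly less than $m$. The existence of such an $M$ gives $\prank(A)\le\rank(M)<m$.

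First I would observe that specifying a matrix $M$ with $M\circ\bar M=A$ is the same as specifying, for each entry $(i,j)$, a phase $e^{i\theta_{i,j}}$, and then setting $M(i,j)=A'(i,j)e^{i\theta_{i,j}}$. So the freedom we have is exactly the choice of one unit complex number per entry.

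Next I would use \lemref{lem:tri} column by column. Since each column of $A'$ is nonnegative and has no dominant entry, that lemma applies to the vector $(A'(1,j),\dots,A'(m,j))$ and produces phases $e^{i\theta_{1,j}},\dots,e^{i\theta_{m,j}}$ with
\[
\sum_{i=1}^m A'(i,j)\,e^{i\theta_{i,j}} \;=\; 0.
\]
Defining $M(i,j):=A'(i,j)e^{i\theta_{i,j}}$ column by column gives a complex matrix satisfying $M\circ\bar M=A$ and, crucially, such that every column of $M$ sums to zero.

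Finally, the column-sum condition says that the all-ones row vector $\mathbf{1}\in\R^m$ satisfies $\mathbf{1}M=0$, so the rows of $M$ are linearly dependent and $\rank(M)\le m-1$. Applying \lemref{lem:1D} yields $\prank(A)\le\rank(M)\le m-1$. There is no real obstacle here; the only step that required work was \lemref{lem:tri}, which has already been established, so the remaining argument is essentially the combination of these two lemmas.
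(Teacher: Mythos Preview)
Your proposal is correct and follows exactly the same approach as the paper: apply \lemref{lem:tri} to each column of $A'$ to obtain phases making every column of $M$ sum to zero, conclude $\rank(M)<m$ from the resulting linear dependence among rows, and finish with \lemref{lem:1D}.
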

\begin{proof}
As each column of $A'$ has no dominant entry, by \lemref{lem:tri} there exist
complex units $e^{i\theta_{jk}}$ such that $\sum_j A'(j,k) e^{i
\theta_{jk}} =0$ for every $k$.  Define $M(j,k)=A'(j,k) e^{i
\theta_{jk}}$.  Then $M \circ \overline M = A$ and $M$ has rank $<m$: as
each column of~$M$ sums to zero, the sum of the $m$ rows is the
0-vector so they are linearly dependent. \lemref{lem:1D} then
completes the proof.
\end{proof}

\subsection{The behavior of PSD-rank under tensoring}

In this subsection, we discuss how PSD-rank behaves under tensoring.
Firstly, we have the following trivial observation on PSD-rank.
\begin{lemma}\label{lem:tensor}
If $P_1$ and $P_2$ are two nonnegative matrices, then it holds that
\[
\prank(P_1\otimes P_2)\leq\prank(P_1) \prank(P_2).
\]
\end{lemma}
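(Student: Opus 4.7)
The plan is to prove the bound by explicitly constructing a PSD factorization of $P_1 \otimes P_2$ from PSD factorizations of $P_1$ and $P_2$, using the tensor product of the factor matrices.

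First I would take optimal factorizations: let $r_1 = \prank(P_1)$ and $r_2 = \prank(P_2)$, and fix PSD matrices $E_i, F_j$ of size $r_1 \times r_1$ with $P_1(i,j) = \Tr(E_i F_j)$, and PSD matrices $G_k, H_l$ of size $r_2 \times r_2$ with $P_2(k,l) = \Tr(G_k H_l)$. Then I would define new matrices indexed by pairs: $\tilde E_{(i,k)} = E_i \otimes G_k$ and $\tilde F_{(j,l)} = F_j \otimes H_l$, each of size $r_1 r_2 \times r_1 r_2$.

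The key verifications are standard tensor identities. First, the tensor product of two PSD matrices is PSD (this follows, e.g., from multiplying their eigendecompositions), so the $\tilde E_{(i,k)}$ and $\tilde F_{(j,l)}$ are PSD. Second, $(A \otimes B)(C \otimes D) = (AC) \otimes (BD)$ and $\Tr(A \otimes B) = \Tr(A)\Tr(B)$. Combining these,
\[
\Tr\bigl(\tilde E_{(i,k)} \tilde F_{(j,l)}\bigr) = \Tr\bigl((E_i F_j) \otimes (G_k H_l)\bigr) = \Tr(E_i F_j)\Tr(G_k H_l) = P_1(i,j) P_2(k,l),
\]
which is exactly the $((i,k),(j,l))$ entry of $P_1 \otimes P_2$. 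Hence $P_1 \otimes P_2$ admits a PSD factorization of size $r_1 r_2$.

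There is essentially no main obstacle here; the proof is a routine application of the multiplicativity properties of the tensor product under matrix multiplication and trace, together with the fact that tensor products preserve positive semidefiniteness. The only thing worth stating explicitly in the writeup is the latter fact, since the rest is a one-line computation.
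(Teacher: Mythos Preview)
Your proof is correct and follows exactly the same approach as the paper: take optimal PSD-factorizations of $P_1$ and $P_2$ and tensor the factors to obtain a PSD-factorization of $P_1\otimes P_2$ of size $\prank(P_1)\prank(P_2)$. If anything, you spell out the standard tensor identities (mixed-product, multiplicativity of trace, preservation of PSD) that the paper leaves implicit.
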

\begin{proof}
Suppose $\{C_i\}$ and $\{D_j\}$ form a size-optimal
PSD-factorization of $P_1$, and $\{E_k\}$ and $\{F_l\}$ form a
size-optimal PSD-factorization of $P_2$, where the indices are
determined by the sizes of $P_1$ and $P_2$. Then it can be seen that
$\{C_i\otimes E_k\}$ and $\{D_j\otimes F_l\}$ form a
PSD-factorization of $P_1\otimes P_2$.
\end{proof}

We now consider an example. Let $x,y$ be two subsets of $\{1,2,\ldots,n\}$.
The disjointness function, $\DISJ_n(x,y)$, is defined to be $1$ if
$x\cap y=\varnothing$ and $0$ otherwise.  We denote its corresponding
$2^n$-by-$2^n$ matrix by $D_n$, i.e., $D_n(x,y)=\DISJ_n(x,y)$. This function is one of the most
important and well-studied in communication complexity.  It can be easily checked
that for any natural number $k$, $D_{k}=D_{1}^{\otimes k}$.
According to the above lemma, we have that $\prank(D_n)\leq 2^n$,
where we used the fact that $\prank(D_1)= 2$. This upper bound is trivial
as the size of $D_n$ is $2^n$, but in this case it is tight.
The following lemma was also found independently by G\'abor Braun and Sebastian Pokutta \cite{BP14}.

\begin{lemma}\label{lem:zero} Suppose $A$ is an
$m$-by-$n$ nonnegative matrix, and has the following block
expression,
\[
A=
\begin{bmatrix}
B & C \\
D & 0
\end{bmatrix}.
\]
Then $\prank(A)\geq\prank(C)+\prank(D)$.
\end{lemma}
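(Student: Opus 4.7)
The plan is to start with a PSD factorization of $A$ of minimal size $r = \prank(A)$, say by matrices $\{E_i\}_{i=1}^m$ (for rows) and $\{F_j\}_{j=1}^n$ (for columns), and exploit the zero block. Let $I_2$ index the rows of the zero block and $J_2$ index its columns, with $I_1, J_1$ indexing the complementary row/column sets. The key observation is that for any $i \in I_2, j \in J_2$ we have $\Tr(E_i F_j) = 0$, and since both $E_i$ and $F_j$ are PSD, this forces the images (column spaces) of $E_i$ and $F_j$ to be orthogonal subspaces.

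The next step is to aggregate these local orthogonality constraints. Let $V$ be the span of the images of $\{E_i : i \in I_2\}$ and $W$ be the span of the images of $\{F_j : j \in J_2\}$. By the preceding observation, $V \perp W$, so $\dim(V) + \dim(W) \le r$. Set $v = \dim(V)$ and $w = \dim(W)$.

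Now I extract a PSD factorization of $D$ of size $v$ as follows. For $i \in I_2$ the matrix $E_i$ has image in $V$, so $E_i = P_V E_i P_V$ where $P_V$ is the orthogonal projection onto $V$. For $j \in J_1$, define $\hat F_j = P_V F_j P_V$, which is PSD and lives on $V$. Then
\[
D(i,j) = \Tr(E_i F_j) = \Tr(P_V E_i P_V F_j) = \Tr(E_i \hat F_j)
\]
for $i \in I_2, j \in J_1$. Viewing everything as $v \times v$ matrices on $V$, this yields $\prank(D) \le v$. Symmetrically, projecting the $E_i$ (for $i \in I_1$) onto $W$ and keeping $F_j$ (for $j \in J_2$) on $W$ yields $\prank(C) \le w$. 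Combining with $v+w \le r$ gives $\prank(A) \ge \prank(C) + \prank(D)$.

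The only step that needs any care is the very first one: justifying that $\Tr(EF) = 0$ for two PSD matrices forces orthogonal ranges. This follows quickly by writing $E = \sum_a \lambda_a \ketbra{u_a}{u_a}$ and $F = \sum_b \mu_b \ketbra{v_b}{v_b}$ in their spectral decompositions (with $\lambda_a, \mu_b > 0$), so that $\Tr(EF) = \sum_{a,b} \lambda_a \mu_b |\braket{u_a}{v_b}|^2 = 0$ implies $\braket{u_a}{v_b} = 0$ for all $a,b$. Everything else is essentially bookkeeping about restrictions to subspaces; no heavy tools are needed.
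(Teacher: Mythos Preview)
Your argument is correct and follows essentially the same route as the paper: start from an optimal PSD factorization, use the zero block to deduce orthogonality of the relevant supports, and then compress to those subspaces to extract factorizations of $C$ and $D$. You are simply more explicit than the paper about the two steps it leaves implicit (why $\Tr(EF)=0$ forces orthogonal ranges, and how to compress via projections), but the underlying idea is identical.
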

\begin{proof}
Suppose $\{E_1,E_2,\ldots,E_m\}$ and $\{F_1,F_2,\ldots,F_n\}$ form a
size-optimal PSD-factorization of~$A$. Suppose the size of $B$ is
$k$-by-$l$, then $\{E_1,E_2,\ldots,E_{k}\}$ and
$\{F_{l+1},F_{k+2},\ldots,F_n\}$ form a PSD-factorization of~$C$, while
$\{E_{k+1},E_{k+2},\ldots,E_m\}$ and $\{F_{1},F_{2},\ldots,F_l\}$ form a
PSD-factorization of $D$. According to the definition of
PSD-factorization, the dimension of the support of
$\sum_{i=l+1}^nF_i$ will be at least $\prank(C)$, and similarly, the
dimension of the support of $\sum_{i=k+1}^mE_i$ will be at least
$\prank(D)$.

On the other hand, for any $i\in\{k+1,k+2,\ldots,m\}$ and
$j\in\{l+1,\ldots,n\}$, $\Tr(E_iF_j)=0$, so the support of
$\sum_{i=k+1}^mE_i$ is orthogonal to that of $\sum_{i=l+1}^nF_i$.
Hence $\prank(A)\geq\prank(C)+\prank(D)$.
\end{proof}

Then we have that
\begin{theorem}
$\prank(D_n) = 2^n$.
\end{theorem}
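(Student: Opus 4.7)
The plan is to prove this by induction on $n$. The upper bound $\prank(D_n)\le 2^n$ is immediate from \lemref{lem:tensor} together with $D_n = D_1^{\otimes n}$ and $\prank(D_1)\le 2$ (which follows trivially from $D_1$ being $2$-by-$2$). The interesting direction is the matching lower bound, which I would extract from \lemref{lem:zero}.

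The key observation is that $D_n$ admits a natural block decomposition indexed by whether the element $1$ is in the subsets $x$ and $y$. If we order rows and columns so that subsets not containing $1$ come first, then
\[
D_n = \begin{bmatrix} D_{n-1} & D_{n-1} \\ D_{n-1} & 0 \end{bmatrix},
\]
because $x\cap y=\varnothing$ iff $1\notin x\cap y$ and $(x\setminus\{1\})\cap(y\setminus\{1\})=\varnothing$, and the $0$-block corresponds to the case when both $x$ and $y$ contain $1$ (making them automatically intersect). This is exactly the shape to which \lemref{lem:zero} applies, yielding
\[
\prank(D_n)\ \ge\ \prank(D_{n-1})+\prank(D_{n-1})\ =\ 2\,\prank(D_{n-1}).
\]

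For the base case $n=1$, I would apply \lemref{lem:zero} to $D_1$ itself, with $B=C=D=[1]$ being the $1$-by-$1$ all-ones matrix of PSD-rank $1$, to conclude $\prank(D_1)\ge 2$. Combined with the trivial upper bound $\prank(D_1)\le 2$, this gives $\prank(D_1)=2$. Then the induction unrolls to $\prank(D_n)\ge 2^n$, matching the upper bound.

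I do not expect a genuine obstacle here: all the work has already been done in \lemref{lem:zero}, and the proof is essentially the observation that the block structure of $D_n$ fits that lemma perfectly and recurses. The only minor care needed is fixing a consistent ordering of rows and columns so that the zero block is in the correct position, but this is purely bookkeeping.
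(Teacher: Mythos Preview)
Your proposal is correct and matches the paper's own proof essentially line for line: the same block decomposition of $D_{k+1}$ in terms of $D_k$, the same appeal to \lemref{lem:zero} for the recursive lower bound, and the same induction starting from $\prank(D_1)=2$. The only minor addition is that you justify $\prank(D_1)\ge 2$ explicitly via \lemref{lem:zero}, whereas the paper simply cites $\prank(D_1)=2$ as a known fact.
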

\begin{proof}
Note that for any integer $k$, $D_{k+1}$ can be expressed as the following block matrix.
\[
D_{k+1}=
\begin{bmatrix}
D_k & D_k \\
D_k & 0
\end{bmatrix},
\]
Then by \lemref{lem:zero} we have that
$\prank(D_{k+1})\geq 2\prank(D_k)$. Since $\prank(D_1)=2$, it follows
that $\prank(D_n)\geq 2^n$. Since $\prank(D_n)\leq 2^n$, this completes the proof.
\end{proof}

Based on this example and by analogy to the normal rank, one might conjecture
that generally $\prank(P_1\otimes P_2)=\prank(P_1)\prank(P_2)$.
This is false, however, as shown by the following counterexample.
\begin{example}
Let $A=
\begin{bmatrix}
1 & a \\
a & 1
\end{bmatrix}$ for nonnegative $a$.  Then $A$ has rank~2, and therefore PSD-rank~2, as long as
$a \not = 1$.  On the other hand,
\[
A\otimes A=
\begin{bmatrix}
1 & a & a & a^2\\
a & 1 & a^2 & a\\
a & a^2 & 1 & a\\
a^2 & a & a & 1
\end{bmatrix}
\]
satisfies the condition of \thmref{thm:not_full} for any $a \in [-1+\sqrt{2}, 1+\sqrt{2}]$.
Thus for $a \in [-1+\sqrt{2}, 1+\sqrt{2}] \setminus \{1\}$ we have $\prank(A \otimes A) < \prank(A)^2$.
\end{example}

\subsection{PSD-rank and real PSD-rank}

In the original definition of PSD-rank, the matrices of the
PSD-factorization can be arbitrary complex Hermitian PSD matrices. A
natural and interesting question is what happens if we restrict
these matrices instead to be positive semidefinite \emph{real}
matrices.\footnote{This question was raised by Dirk Oliver Theis in
the Dagstuhl seminar 13082 (February 2013).} We call this restriction the \emph{real PSD-rank},
and for a nonnegative matrix~$A$ we denote it by $\rprank(A)$.
The following observation (proved in the appendix)
shows that the multiplicative gap between these notions cannot be too large.

\begin{theorem}\label{thm:realvscomplex}
If $A$ is a nonnegative matrix, then $\prank(A)\leq\rprank(A)\leq2\prank(A)$.
\end{theorem}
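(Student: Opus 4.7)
The plan is to handle the two inequalities separately. The first inequality $\prank(A)\le\rprank(A)$ is immediate from the definitions: a real symmetric PSD matrix is in particular a complex Hermitian PSD matrix, so every real PSD factorization is also a complex PSD factorization, which gives at least as good an upper bound on $\prank$ as on $\rprank$.

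For the nontrivial direction $\rprank(A)\le 2\prank(A)$, I would use the standard realification embedding of complex Hermitian matrices into real symmetric matrices. Given a complex PSD factorization $A(i,j)=\Tr(E_iF_j)$ of size $r=\prank(A)$, write each Hermitian $r$-by-$r$ matrix as $E_i=X_i+\mathrm{i}Y_i$ with $X_i$ real symmetric and $Y_i$ real antisymmetric (and similarly $F_j=U_j+\mathrm{i}V_j$), and define the $2r$-by-$2r$ real matrices
\[
\widetilde E_i=\begin{pmatrix} X_i & -Y_i \\ Y_i & X_i \end{pmatrix},\qquad
\widetilde F_j=\begin{pmatrix} U_j & -V_j \\ V_j & U_j \end{pmatrix}.
\]
The plan is to verify three properties of this map: (i) $\widetilde E_i$ is real symmetric, which follows from $X_i^T=X_i$ and $Y_i^T=-Y_i$; (ii) $\widetilde E_i\succeq 0$; and (iii) the trace identity $\Tr(\widetilde E_i\widetilde F_j)=2\Tr(E_iF_j)$.

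For (ii), the cleanest route is to note that if $v\in\mathbb{C}^r$ with $v=a+\mathrm{i}b$, then the real vector $\widetilde v=\binom{a}{b}\in\mathbb{R}^{2r}$ satisfies $\widetilde v^T\widetilde E_i\widetilde v=\Re(v^*E_iv)=v^*E_iv$, and every real vector in $\mathbb{R}^{2r}$ arises this way, so PSD-ness transfers. For (iii), a direct $2\times 2$ block multiplication gives $\Tr(\widetilde E_i\widetilde F_j)=2\Tr(X_iU_j)-2\Tr(Y_iV_j)$, which equals $2\Re\Tr(E_iF_j)=2\Tr(E_iF_j)$ (the trace of a product of two Hermitian PSD matrices is a nonnegative real, and the vanishing of cross terms like $\Tr(X_iV_j)$ follows from symmetric-times-antisymmetric having zero trace).

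Finally, to absorb the factor of $2$, I would simply replace $\widetilde E_i$ by $\tfrac{1}{2}\widetilde E_i$, which is still real symmetric PSD. This yields a real PSD factorization of $A$ of size $2r=2\prank(A)$, establishing the bound. I do not expect any real obstacle here — the only point that requires a bit of care is checking that $Y_i$ is indeed antisymmetric (which uses Hermiticity of $E_i$) so that $\widetilde E_i$ comes out symmetric rather than merely square.
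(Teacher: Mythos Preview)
Your proof is correct and follows essentially the same approach as the paper: both use the standard realification embedding $E\mapsto\begin{pmatrix}\Re E & \mp\Im E\\ \pm\Im E & \Re E\end{pmatrix}$, verify symmetry and positive semidefiniteness via a quadratic-form computation, and check the trace identity directly. The only cosmetic differences are the sign convention on the off-diagonal blocks and the way the factor $2$ is absorbed (the paper scales both factors by $1/\sqrt{2}$ rather than one factor by $1/2$).
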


Below in \exampleref{realgap} we will exhibit a gap between $\prank(A)$ and $\rprank(A)$ by a factor of~$\sqrt{2}$.

\section{Three new lower bounds for PSD-rank}
In this section we give three new lower bounds on the PSD-rank. All
of these bounds are based on the interpretation of PSD-rank in terms of
communication complexity.

\subsection{A physical explanation of PSD-rank}
For a nonnegative $m \times n$ matrix $P = [P(i,j)]_{i,j}$, suppose
$\prank(P)=r$. Then there exist $r\times r$ positive
semidefinite matrices $E_i$, $F_j$, satisfying that $P(i,j) = \Tr(E_i  F_j)$,
for every $i\in[m]$ and $j\in[n]$. Fiorini et al.\ show
how from a size-$r$ PSD-factorization of a matrix $P$, one can
construct a one-way quantum communication protocol sending
$(r+1)$-dimensional messages that
computes $P$ in expectation~\cite{FMP+12}.
We will now show that without loss of generality that factors $E_1,
\ldots, E_m, F_1, \ldots, F_n$ have a very particular form. Namely,
we can assume that $\sum_i E_i =I$ (so they form a POVM) and $\Tr(F_j)=1$
(so the $F_j$ can be viewed as quantum states).
We now give a direct proof of this without increasing the size.
This observation will be the key to our lower bounds.
\begin{lemma}\label{lemma:physical}
Let $P$ be an $m$-by-$n$ matrix where each column is a probability
distribution.  If $\prank(P)=r$, then there exists a
PSD-factorization for $P(i,j)=\Tr(E_iF_j)$ such that $\Tr(F_j)=1$
for each~$j$ and
\begin{align*}
\sum_{i=1}^{m}E_i=I,
\end{align*}
where $I$ is the $r$-dimensional identity.
\end{lemma}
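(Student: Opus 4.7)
The plan is to exhibit a single change of basis that simultaneously converts the $\tilde E_i$'s into a POVM and turns the $\tilde F_j$'s into trace-one operators, without changing any of the entries $P(i,j)$.

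First, I would start from an optimal PSD-factorization $P(i,j)=\Tr(\tilde E_i \tilde F_j)$ of size~$r$ and form the aggregate operator
\[
G=\sum_{i=1}^{m}\tilde E_i,
\]
which is PSD as a sum of PSD matrices. The key auxiliary claim is that $G$ is invertible. If not, let $\Pi$ be the orthogonal projection onto the support of $G$; since each $\tilde E_i$ is PSD and $\tilde E_i\preceq G$, the range of each $\tilde E_i$ lies in the support of $G$, so $\Pi\tilde E_i\Pi=\tilde E_i$. Then $\tilde E_i$ and $\Pi\tilde F_j\Pi$, viewed as operators on the support of $G$, give a PSD-factorization of~$P$ of size $\rk(G)<r$, contradicting $\prank(P)=r$.

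Next, with $G$ invertible, I would define
\[
E_i=G^{-1/2}\tilde E_i G^{-1/2},\qquad F_j=G^{1/2}\tilde F_j G^{1/2}.
\]
Both families are PSD (congruence preserves positivity). Cyclicity of the trace gives $\Tr(E_i F_j)=\Tr(\tilde E_i \tilde F_j)=P(i,j)$, and summing the $E_i$'s yields $\sum_i E_i=G^{-1/2}G\,G^{-1/2}=I$, so the $E_i$'s form a POVM. Finally, using the POVM property and the fact that each column of $P$ is a probability distribution,
\[
\Tr(F_j)=\Tr\bigl(G\,\tilde F_j\bigr)=\sum_{i=1}^{m}\Tr(\tilde E_i \tilde F_j)=\sum_{i=1}^{m}P(i,j)=1,
\]
so the $F_j$'s are (up to the PSD property, which they already have) valid quantum states.

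The only potentially subtle step is the invertibility of $G$; once that is in hand, the rest is a direct verification. This congruence-normalization is the standard trick, and I would present it in essentially one page, with the dimension-reduction argument for invertibility stated explicitly so that the size of the factorization is preserved at exactly~$r$.
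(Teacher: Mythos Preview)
Your proof is correct and follows essentially the same approach as the paper: both show that $G=\sum_i \tilde E_i$ is invertible by a dimension-reduction argument and then conjugate the factors so that $G$ becomes the identity. The only cosmetic difference is that the paper first unitarily diagonalizes $G$ and then rescales by a diagonal matrix, whereas you apply the congruence by $G^{-1/2}$ directly; these produce unitarily equivalent factorizations, and your version is in fact a bit more streamlined.
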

\begin{proof}
Suppose $r$-by-$r$ positive semidefinite matrices $C_1, \ldots, C_m$
and $D_1, \ldots, D_n$ form a PSD-factorization for $P$. Note that
for any $r$-by-$r$ unitary matrix $U$, it holds that
\begin{align*}
\Tr ( C_i  D_j)=\Tr ((UC_iU^\dag)(UD_jU^\dag)).
\end{align*}
Therefore $UC_iU^\dag$ and $UD_jU^\dag$ also form a
PSD-factorization for $P$. In the following, we choose $U$ as the
unitary matrix that makes $C'=UCU^\dag$ diagonal, where
$C=\sum_iC_i$.

We first show that $C$ is full-rank.  Suppose not.  Then, without loss of generality, we may assume
$C'$ is a rank-$(r-1)$ diagonal matrix with the $r^{th}$ diagonal
entry being $0$. Since $C'=\sum_iUC_iU^\dag$, we have that for any
$i\in[m]$, the $r^{th}$ column and the $r^{th}$ row of $UC_iU^\dag$
are all zeros. That is to say, in the PSD-factorization for $P$
formed by $UC_iU^\dag$ and $UD_jU^\dag$, the $r$th dimension has no
contribution, resulting in a smaller PSD-factorization for $P$, which is a
contradiction.

Now that $C'$ is full-rank, one can always find another full-rank
nonnegative diagonal matrix $V$ such that $VC'V^\dag=I$. Let
$E_i=VUC_iU^\dag V^\dag$, and $F_j=V^{-1}UD_jU^\dag (V^{-1})^\dag$.
Then it is not difficult to verify that $E_i$ and $F_j$ form another
PSD-factorization for $P$ with size $r$, satisfying $\sum_iE_i=I$.

Finally note that $\Tr(F_j)=\Tr(F_j I)=\sum_i \Tr(E_i F_j)=1$ as each
column of $P$ sums to one.
\end{proof}

\subsection{A lower bound based on fidelity}

\begin{definition}
For nonnegative stochastic matrix~$P$, define
$$
B_3(P)=\max_q \frac{1}{\sum_{i,j} q_i q_j F(P_i,P_j)^2},
$$
where $P_i$ is the $i^{th}$ column of $P$ and the max is taken over
probability distributions $q=\{q_j\}$.
\end{definition}

\begin{theorem}\label{thm:fidelity}
$\prank(P) \ge B_3(P)$.
\end{theorem}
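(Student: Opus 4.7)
The plan is to use the physical interpretation from \lemref{lemma:physical}: since $P$ is stochastic, we may assume the size-$r$ factorization $P(a,i)=\Tr(E_a F_i)$ has $\{E_a\}_a$ a POVM on $\mathbb{C}^r$ and each $F_i$ a quantum state. Then the $i^{th}$ column $P_i$ of~$P$ is exactly the classical probability distribution obtained by measuring the state $F_i$ with the POVM $\{E_a\}$.

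Fix any distribution $q$, and consider the averaged state $\rho=\sum_i q_i F_i$, which is a density matrix on $\mathbb{C}^r$. The key chain of inequalities is
\[
F(P_i,P_j)^2 \;\ge\; F(F_i,F_j)^2 \;\ge\; \Tr(F_i F_j),
\]
where the first inequality uses \factref{ft:fidelClasQuan} (fidelity can only increase under a fixed measurement) applied to the POVM $\{E_a\}$, and the second is \factref{ft:fidelity}. Summing against $q_i q_j$ and pulling the sum inside the trace gives
\[
\sum_{i,j} q_i q_j F(P_i,P_j)^2 \;\ge\; \sum_{i,j} q_i q_j \Tr(F_i F_j) \;=\; \Tr(\rho^2).
\]

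It then remains to observe that any $r$-dimensional quantum state $\rho$ satisfies $\Tr(\rho^2)\ge 1/r$; this follows immediately from Cauchy-Schwarz applied to the eigenvalues of~$\rho$ (which are nonnegative and sum to~$1$). Combining,
\[
\sum_{i,j} q_i q_j F(P_i,P_j)^2 \;\ge\; \frac{1}{r} \;=\; \frac{1}{\prank(P)},
\]
and rearranging and taking the maximum over~$q$ yields $\prank(P)\ge B_3(P)$.

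No serious obstacle arises: the proof is essentially a direct application of \lemref{lemma:physical} to pass from a factorization to a quantum protocol, \factref{ft:fidelClasQuan} to move from the quantum fidelity on states to the classical fidelity on measurement outcomes (columns of~$P$), and \factref{ft:fidelity} to control $\Tr(F_iF_j)$ by squared fidelity. The only thing to be careful about is the direction of each fidelity inequality and the normalization conventions (columns of~$P$ summing to one, states of trace one, the POVM completeness relation), all of which are supplied by the stochastic-column assumption and \lemref{lemma:physical}.
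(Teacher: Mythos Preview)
Your proposal is correct and essentially the same as the paper's proof: both invoke \lemref{lemma:physical} to obtain a POVM/state factorization, form the mixture $\rho=\sum_j q_j F_j$, bound $\Tr(\rho^2)$ above by $\sum_{i,j} q_i q_j F(P_i,P_j)^2$ via \factref{ft:fidelity} and \factref{ft:fidelClasQuan}, and then use that an $r$-dimensional state has $\Tr(\rho^2)\ge 1/r$ (the paper phrases this last step as the trace-norm rank inequality $r\ge \trn{\rho}^2/\fro{\rho}^2$, which is the same Cauchy--Schwarz argument).
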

\begin{proof}
Let $\{E_i\}, \{\rho_j\}$ be a size-optimal PSD-factorization of $P$.
According to \lemref{lemma:physical}, we may assume that $\sum_i E_i
= I$ and $\Tr(\rho_j)=1$ for each~$j$.
For a probability distribution $\{q_j\}$, let $\rho=\sum_j q_j
\rho_j$. Notice that the dimension of $\rho$ is $\prank(P)$, thus
the \emph{rank} of $\rho$ will be at most $\prank(P)$.  We use the
trace norm bound Eq.~(\ref{eq:trace_norm_bound}) to lower bound the rank of $\rho$ giving
\[
\prank(P) \ge \frac{\trn{\rho}^2}{\fro{\rho}^2} =
\frac{1}{\fro{\rho}^2}.
\]
Let us now proceed to upper bound $\fro{\rho}^2$.  We have
\[
\fro{\rho}^2 = \Tr(\rho^2)=\sum_{i,j} q_i q_j \Tr(\rho_i\rho_j) \le
\sum_{i,j} q_i q_j F(\rho_i,\rho_j)^2,
\]
where we used \factref{ft:fidelity}. As $P_i$ is obtained from measuring
$\rho_i$ with the POVM $\{E_j\}$, according to
\factref{ft:fidelClasQuan} we have that $F(\rho_i,\rho_j)\le
F(P_i,P_j)$, which gives the bound
$\displaystyle \prank(P) \ge \max_q \frac{1}{\sum_{i,j} q_i q_j F(P_i,P_j)^2}$.
\end{proof}

We can extend the notation $B_3(P)$ to nonnegative matrices $P$ that are not
stochastic, by first normalizing the columns of $P$ to make
it stochastic and then applying $B_3$ to the resulting stochastic matrix.
As rescaling a nonnegative matrix by multiplying its
rows or columns with nonnegative numbers does not increase its
PSD-rank, we have the following definition and corollary.

\begin{definition}
For a nonnegative $m \times n$ matrix $P = [P(i,j)]_{i,j}$, define
$$
B_3'(P)=\max_{q,D} \frac{1}{\sum_{i,j} q_i q_j F((DP)_i,(DP)_j)^2},
$$
where $q=\{q_j\}$ is a probability distribution, $D$ is a diagonal
nonnegative matrix, and $(DP)_i$ is the probability distribution
obtained by normalizing the $i^{th}$ column of $DP$ via a constant factor.
\end{definition}

\begin{corollary}\label{cor:B1rescal}
$\prank(P) \ge B_3'(P)$.
\end{corollary}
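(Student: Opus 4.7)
The plan is to reduce \corref{cor:B1rescal} directly to \thmref{thm:fidelity} using the rescaling invariance of PSD-rank that was observed right after the definition of $\prank$. First, I would fix any nonnegative diagonal matrix $D$ and consider the matrix $DP$. Since left-multiplication by a nonnegative diagonal matrix does not increase PSD-rank, we have $\prank(DP) \le \prank(P)$. Any all-zero rows introduced by zero entries of $D$ can be removed without affecting PSD-rank, so there is no harm in assuming $DP$ has no zero rows.

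Next, I would normalize the columns of $DP$ to obtain a column-stochastic matrix $\tilde P$. This corresponds to right-multiplying $DP$ by a positive diagonal matrix $D'$ whose diagonal entries are the reciprocals of the column sums of $DP$ (dropping any all-zero columns of $DP$ first, which again does not change PSD-rank). By the same rescaling observation, $\prank(\tilde P) \le \prank(DP) \le \prank(P)$. By construction, the $i^{th}$ column $\tilde P_i$ of $\tilde P$ equals the normalized distribution $(DP)_i$ appearing in the definition of $B_3'$.

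Now I would apply \thmref{thm:fidelity} to the stochastic matrix $\tilde P$: for any probability distribution $q$,
\[
\prank(P) \ge \prank(\tilde P) \ge B_3(\tilde P) \ge \frac{1}{\sum_{i,j} q_i q_j F(\tilde P_i, \tilde P_j)^2} = \frac{1}{\sum_{i,j} q_i q_j F((DP)_i,(DP)_j)^2}.
\]
Taking the maximum over all pairs $(q,D)$ on the right-hand side yields $\prank(P) \ge B_3'(P)$, as desired.

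The only mild obstacle is book-keeping around degenerate $D$: if $D$ has zeros on the diagonal or if some column sum of $DP$ is zero, the definition of $B_3'$ requires a normalized column distribution, which is not well defined. This is handled by deleting the corresponding zero rows or columns before normalizing, which does not change the PSD-rank; equivalently, one can restrict the maximum in the definition of $B_3'$ to $D$ and column-sums that are strictly positive, which suffices since the supremum value is unchanged.
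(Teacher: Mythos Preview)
Your proposal is correct and follows the same approach as the paper: the paper simply remarks that rescaling rows or columns by nonnegative numbers does not increase the PSD-rank and then states the corollary without further proof, so you have spelled out precisely the intended reduction to \thmref{thm:fidelity}. Your additional care with degenerate $D$ (zero diagonal entries and zero columns) is a welcome clarification that the paper omits.
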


We now see an example where rescaling can improve the bound.
\begin{example}
\label{ex:Rescaling} Consider the following $n\times n$ nonnegative
matrix $A$, where $n=10$, and $\epsilon=0.01$.
\[ A = \begin{bmatrix}
1 & 1 & 1 & \cdots & 1 & 1\\
\epsilon & 1 & \epsilon & \cdots & \epsilon & \epsilon\\
\epsilon & \epsilon & 1 & \cdots & \epsilon & \epsilon\\
\vdots & \vdots & \vdots & \ddots  & \vdots & \vdots\\
\epsilon & \epsilon & \epsilon & \cdots & 1 & \epsilon\\
\epsilon & \epsilon & \epsilon & \cdots & \epsilon & 1
\end{bmatrix}. \]
Suppose $P$ is the nonnegative stochastic matrix obtained by
normalizing the columns of $A$ by constant factors, then it has the
same PSD-rank as $A$. By choosing $q$ as the uniform probability
distribution, we can get a lower bound of $B_3(P)$ as follows. Note
that for any $i\in[n]\setminus\{1\}$, we have that
\[
f_1:=F(P_1,P_i)=\frac{1+\sqrt{\epsilon}+(n-2)\epsilon}{\sqrt{1+(n-1)\epsilon}\cdot\sqrt{2+(n-2)\epsilon}},
\]
and for any distinct $i,j\in[n]\setminus\{1\}$, it holds that
\[
f_2:=F(P_i,P_j)=\frac{1+2\sqrt{\epsilon}+(n-3)\epsilon}{2+(n-2)\epsilon}.
\]
Then we get
\[
B_3(A)\geq \frac{n^2}{n+2(n-1)\cdot
{f_1}^2+(n-2)(n-1)\cdot{f_2}^2}\approx 2.09.
\]
We now multiply every row of $A$ by $10$ except that the first one
is multiplied by $0$, i.e., the matrix~$D$ in \corref{cor:B1rescal}
is a diagonal nonnegative matrix with diagonal $(0,10,\ldots,10)$.
Then we obtain another nonnegative matrix $\hat A=DA$. By a similar
calculation as above, it can be verified that $B_3(\hat A)\geq4.88$,
hence we have $B_3'(A)\geq 4.88$, which is a better lower bound.
\end{example}

\subsection{A lower bound based on the structure of POVMs}

\begin{definition}
For nonnegative stochastic matrix~$P$, define $B_4(P)=\sum_i \max_j P(i,j)$.
\end{definition}

\begin{theorem}
$\prank(P) \ge B_4(P).$
\end{theorem}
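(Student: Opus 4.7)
The plan is to use Lemma~\ref{lemma:physical} to put the factorization in the normalized form where the $E_i$'s form a POVM (i.e.\ $\sum_i E_i = I_r$ with $r = \prank(P)$) and each $F_j$ is a quantum state ($\Tr(F_j)=1$). Once in this form, bounding $B_4(P)$ by $r$ is essentially an operator-inequality exercise: each column-wise max $\max_j P(i,j)$ is controlled by the largest eigenvalue of $E_i$, and the eigenvalues of the $E_i$'s are in turn controlled by the trace identity $\sum_i \Tr(E_i)=r$.

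More concretely, I would carry out the following steps. First, apply Lemma~\ref{lemma:physical} to obtain the normalization above; this is valid because $P$ is stochastic. Second, for each fixed $i$, use the fact that for a PSD matrix $E_i$ we have $E_i \preceq \lambda_{\max}(E_i)\, I_r$, and therefore for every quantum state $F_j$,
\[
P(i,j) = \Tr(E_i F_j) \;\le\; \lambda_{\max}(E_i)\, \Tr(F_j) \;=\; \lambda_{\max}(E_i).
\]
Taking the max over $j$ yields $\max_j P(i,j) \le \lambda_{\max}(E_i)$. Third, sum over $i$ and use that for each PSD matrix, $\lambda_{\max}(E_i) \le \Tr(E_i)$ (the largest eigenvalue is at most the sum of all eigenvalues, which are all nonnegative), so
\[
\sum_i \max_j P(i,j) \;\le\; \sum_i \lambda_{\max}(E_i) \;\le\; \sum_i \Tr(E_i) \;=\; \Tr\!\Bigl(\sum_i E_i\Bigr) \;=\; \Tr(I_r) \;=\; r.
\]
Since $r = \prank(P)$, this gives $\prank(P) \ge B_4(P)$.

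There is no real obstacle here; the entire argument hinges on the normalization from Lemma~\ref{lemma:physical}, without which the $E_i$'s would not sum to the identity and the trace bookkeeping in the last step would fail. The two inequalities used ($E_i \preceq \lambda_{\max}(E_i) I$ and $\lambda_{\max}(E_i) \le \Tr(E_i)$) are both standard and tight for rank-one $E_i$, which already suggests that the bound $B_4$ can match $\prank(P)$ exactly in natural cases (for instance, when the $E_i$'s are projectors onto an orthonormal basis and each $F_j$ is supported on one of those basis vectors).
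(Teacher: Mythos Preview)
Your proof is correct and follows essentially the same approach as the paper: invoke \lemref{lemma:physical} to normalize, establish $\max_j P(i,j)\le \Tr(E_i)$, and sum using $\sum_i E_i=I_r$. The only cosmetic difference is that the paper bounds $\Tr(E_i F_j)\le \Tr(E_i)$ directly from $F_j\preceq I$ (a consequence of $\Tr(F_j)=1$), whereas you route through the intermediate quantity $\lambda_{\max}(E_i)$; both arguments yield the same inequality.
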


\begin{proof}
Let $\{E_i\}, \{\rho_j\}$ be a size-optimal PSD-factorization of $P$ with $\sum_i E_i=I$ and
$\Tr(\rho_j)=1$ for each~$j$.  Note that this condition on the trace of $\rho_j$ implies $I \succeq \rho_j$.
Thus
\[
\Tr(E_i)=\Tr(E_i\cdot I)\geq \max_j \Tr(E_i \rho_j)=\max_j P(i,j).
\]
On the other hand, since $\sum_i E_i = I$, we have
\[
\prank(P)=\sum_i\Tr(E_i)\geq\sum_i\max_j P(i,j),
\]
where we used that the size of $I$ is $\prank(P)$.
\end{proof}

A variant of $B_4$ involving rescaling can sometimes lead to better bounds:

\begin{definition}
For a nonnegative $m \times n$ matrix $P = [P(i,j)]_{i,j}$, define
$$
B_4'(P)=\max_D\sum_i \max_j ((DP)_j)_i,
$$
where $D$ is a diagonal nonnegative matrix, $(DP)_j$ is the
probability distribution obtained by normalizing the $j^{th}$ column
of $DP$ via a constant factor, and $((DP)_j)_i$ is the $i^{th}$
entry of $(DP)_j$.
\end{definition}

\begin{corollary}
$\prank(P) \ge B_4'(P)$.
\end{corollary}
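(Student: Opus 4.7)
The plan is to reduce the corollary to the theorem $\prank(P) \ge B_4(P)$ by exploiting two invariance properties of PSD-rank that were established earlier in the paper: for any nonnegative diagonal matrices $D_1, D_2$, we have $\prank(D_1 P D_2) \le \prank(P)$, and the bound $B_4$ is defined on stochastic matrices (those whose columns are probability distributions). The idea is that rescaling a nonnegative matrix cannot increase its PSD-rank, so any rescaled version provides a lower bound via $B_4$.

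Concretely, fix an arbitrary nonnegative diagonal matrix $D$. Without loss of generality assume every column of $DP$ has a positive sum (otherwise drop those columns, which only decreases the right-hand side and does not affect the PSD-rank). Let $D'$ be the diagonal nonnegative matrix whose $j$th diagonal entry is the reciprocal of the sum of the $j$th column of $DP$, so that $P' := D P D'$ is a stochastic matrix whose $j$th column is exactly the normalized distribution $(DP)_j$. Since both $D$ and $D'$ are nonnegative diagonal matrices, $\prank(P') \le \prank(P)$.

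Now apply the already-proved inequality $\prank(P') \ge B_4(P')$. By definition,
\[
B_4(P') = \sum_i \max_j P'(i,j) = \sum_i \max_j ((DP)_j)_i,
\]
so we obtain
\[
\prank(P) \ge \prank(P') \ge \sum_i \max_j ((DP)_j)_i.
\]
Taking the maximum over all nonnegative diagonal matrices $D$ on the right-hand side yields $\prank(P) \ge B_4'(P)$, which is the desired inequality.

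There is no real obstacle here; this is a routine application of the scaling-invariance of PSD-rank combined with the previous theorem. The only minor subtlety to mention carefully is the handling of columns whose rescaled sum is zero, which is dealt with by simply omitting those columns before normalizing.
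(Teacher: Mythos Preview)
Your proof is correct and follows exactly the approach the paper intends: the paper does not write out a separate proof for this corollary, but derives it (as with the analogous \corref{cor:B1rescal}) directly from the fact that multiplying rows or columns by nonnegative scalars does not increase PSD-rank, together with the theorem $\prank(P)\ge B_4(P)$ for stochastic $P$. Your handling of zero-sum columns is a reasonable minor addendum.
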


\begin{example}
We consider the same matrices $A$ and $D$ as in \exampleref{ex:Rescaling}, and get that
\[
B_4(A)= \frac{1}{1+(n-1)\epsilon}+(n-1)\cdot\frac{1}{2+(n-2)\epsilon}\approx 5.24.
\]
Similarly, it can be checked that $B_4'(A)\geq 8.33$. The latter
indicates that $\prank(A)\geq 9$, which is better than the bound $4$
given by $B_1(A)$ or $6$ by $B_2(A)$.
\end{example}

\subsection{Another bound that combines $B_3$ with $B_4$}

Here we will show that $B_4$ can be strengthened
further by combining it with the idea that bounds $\Tr(\sigma^2)$ in~$B_3$,
where $\sigma$ is a quantum state that can be expressed as some
linear combination of $\rho_i$'s.

\begin{definition}
For a nonnegative stochastic matrix $P = [P(i,j)]_{i,j}$, define
$$
B_5(P)=\sum_i \max_{q^{(i)}}\frac{\sum_k
q^{(i)}_kP(i,k)}{\sqrt{\sum_{s,t} q^{(i)}_s q^{(i)}_t
F(P_s,P_t)^2}},
$$
where $P_s$ is the $s^{th}$ column of $P$, and for every $i$,
$q^{(i)}=\{q^{(i)}_k\}$ is a probability distribution.
\end{definition}

\begin{theorem}
$\prank(P) \ge B_5(P)$.
\end{theorem}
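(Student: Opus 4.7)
The plan is to strengthen the argument for $B_4$ by replacing its worst-case estimate with a fidelity-based refinement of the sort used in $B_3$. By \lemref{lemma:physical}, I may assume that the size-optimal PSD-factorization $\{E_i\}, \{\rho_j\}$ of $P$ satisfies $\sum_i E_i = I$ and $\Tr(\rho_j) = 1$ for every $j$. Taking the trace of $\sum_i E_i = I$ gives $\prank(P) = \sum_i \Tr(E_i)$, so it suffices to prove that for every $i$ and every probability distribution $q^{(i)}$,
\[
\Tr(E_i) \;\ge\; \frac{\sum_k q^{(i)}_k P(i,k)}{\sqrt{\sum_{s,t} q^{(i)}_s q^{(i)}_t\,F(P_s,P_t)^2}},
\]
since then maximizing over each $q^{(i)}$ and summing over $i$ yields $\prank(P) \ge B_5(P)$.

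For each $i$, form the convex combination $\sigma^{(i)} = \sum_k q^{(i)}_k \rho_k$, which is itself a density matrix. The key operator inequality I will use is that for any positive semidefinite $E$ and any Hermitian $\sigma$,
\[
\Tr(E\sigma) \;\le\; \|\sigma\|_{\mathrm{op}}\cdot\Tr(E).
\]
This follows immediately by diagonalizing $E = \sum_k \lambda_k |v_k\rangle\langle v_k|$ with $\lambda_k \ge 0$: then $\Tr(E\sigma) = \sum_k \lambda_k \langle v_k|\sigma|v_k\rangle \le \|\sigma\|_{\mathrm{op}} \sum_k \lambda_k$. Moreover the largest eigenvalue of a Hermitian matrix is bounded by its Frobenius norm, so $\|\sigma^{(i)}\|_{\mathrm{op}} \le \sqrt{\Tr((\sigma^{(i)})^2)}$. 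Combining these yields $\Tr(E_i) \ge \Tr(E_i\sigma^{(i)})/\sqrt{\Tr((\sigma^{(i)})^2)}$.

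To finish, I rewrite both sides in terms of the columns of $P$. Linearity gives $\Tr(E_i\sigma^{(i)}) = \sum_k q^{(i)}_k \Tr(E_i\rho_k) = \sum_k q^{(i)}_k P(i,k)$, while
\[
\Tr\bigl((\sigma^{(i)})^2\bigr) = \sum_{s,t} q^{(i)}_s q^{(i)}_t \Tr(\rho_s\rho_t) \le \sum_{s,t} q^{(i)}_s q^{(i)}_t F(\rho_s,\rho_t)^2 \le \sum_{s,t} q^{(i)}_s q^{(i)}_t F(P_s,P_t)^2,
\]
using \factref{ft:fidelity} and then \factref{ft:fidelClasQuan} applied to the POVM $\{E_j\}$, whose outcome distribution on $\rho_s$ is exactly $P_s$. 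Substituting back establishes the displayed inequality and hence the theorem.

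The main obstacle is choosing the right operator inequality. The trivial bound $\Tr(E_i\sigma^{(i)}) \le \Tr(E_i)$ used for $B_4$ discards all information about $\sigma^{(i)}$, while the Cauchy--Schwarz bound $\Tr(E_i\sigma^{(i)}) \le \sqrt{\Tr(E_i)\Tr((\sigma^{(i)})^2)}$ used inside $B_3$ puts a square root on $\Tr(E_i)$, which does not aggregate cleanly through $\sum_i \Tr(E_i) = \prank(P)$. The sharper estimate $\Tr(E\sigma) \le \|\sigma\|_{\mathrm{op}}\Tr(E)$ is precisely what makes $\Tr(E_i)$ appear linearly on the right, so that summing over $i$ returns $\prank(P)$ directly and produces exactly the ratio defining $B_5$.
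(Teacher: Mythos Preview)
Your proof is correct and follows the same architecture as the paper: normalize via \lemref{lemma:physical}, set $\sigma^{(i)}=\sum_k q^{(i)}_k\rho_k$, bound $\Tr(E_i)$ below by $\Tr(E_i\sigma^{(i)})/\sqrt{\Tr((\sigma^{(i)})^2)}$, control $\Tr((\sigma^{(i)})^2)$ via \factref{ft:fidelity} and \factref{ft:fidelClasQuan}, and sum. The only difference is the intermediate inequality: the paper uses Cauchy--Schwarz $\Tr(E_i\sigma^{(i)})^2\le\Tr(E_i^2)\Tr((\sigma^{(i)})^2)$ together with $\Tr(E_i^2)\le\Tr(E_i)^2$, while you use $\Tr(E_i\sigma^{(i)})\le\|\sigma^{(i)}\|_{\mathrm{op}}\Tr(E_i)$ together with $\|\sigma^{(i)}\|_{\mathrm{op}}\le\sqrt{\Tr((\sigma^{(i)})^2)}$. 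Both chains land on the identical estimate, so neither buys anything the other does not.

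One correction to your closing commentary: the Cauchy--Schwarz inequality you quote is misstated. It gives $\Tr(E_i\sigma^{(i)})\le\sqrt{\Tr(E_i^2)}\sqrt{\Tr((\sigma^{(i)})^2)}$, not $\sqrt{\Tr(E_i)}\sqrt{\Tr((\sigma^{(i)})^2)}$, and since $\Tr(E_i^2)\le\Tr(E_i)^2$ for PSD $E_i$ one recovers $\Tr(E_i)$ linearly, exactly as in your operator-norm route. So the Cauchy--Schwarz approach does aggregate cleanly through $\sum_i\Tr(E_i)=\prank(P)$; your stated obstacle is not actually an obstacle.
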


\begin{proof}
We define $\{E_i\}$ and $\{\rho_j\}$ as before. For an
arbitrary~$i$, we define $\sigma_i=\sum_kq^{(i)}_k\rho_k$. This is a
valid quantum state.
Since $\Tr(E_i\rho_j)=P(i,j)$, it holds that
$\Tr(E_i\sigma_i)=\sum_kq^{(i)}_kP(i,k)$. The Cauchy-Schwarz
inequality gives $\Tr^2(E_i\sigma_i)\leq \Tr(E_i^2)\Tr(\sigma_i^2)$.
This implies that
\[
\left(\sum_kq^{(i)}_kP(i,k)\right)^2\leq \Tr^2(E_i)\sum_{s,t}
q^{(i)}_s q^{(i)}_t F(P_s,P_t)^2,
\]
where we used the facts that $\Tr(E_i^2)\leq\Tr^2(E_i)$ and
$\Tr(\sigma_i^2)\leq \sum_{s,t} q^{(i)}_s q^{(i)}_t F(P_s,P_t)^2$;
the latter has been proved in \thmref{thm:fidelity}. Therefore,
for any distribution $q^{(i)}$ it holds that
\[
\Tr(E_i)\geq\frac{\sum_kq^{(i)}_kP(i,k)}{\sqrt{\sum_{s,t} q^{(i)}_s
q^{(i)}_t F(P_s,P_t)^2}}.
\]
Substituting this result into the fact that $\sum_i \Tr(E_i) =
\prank(P)$ completes the proof.
\end{proof}

We also have the following corollary that allows rescaling.

\begin{definition}
For a nonnegative $m \times n$ matrix $P = [P(i,j)]_{i,j}$, define
$$
B'_5(P)=\max_D\sum_i \max_{q^{(i)}}\frac{\sum_k
q^{(i)}_k((DP)_k)_i}{\sqrt{\sum_{s,t} q^{(i)}_s q^{(i)}_t
F((DP)_s,(DP)_t)^2}},
$$
where for every $i$, $q^{(i)}=\{q^{(i)}_k\}$ is a probability
distribution, $D$ is a diagonal nonnegative matrix, $(DP)_k$ is the
probability distribution obtained by normalizing the $k^{th}$ column
of $DP$ via a constant factor, and $((DP)_k)_i$ is the $i^{th}$
entry of $(DP)_k$.
\end{definition}

\begin{corollary}
$\prank(P) \ge B'_5(P)$.
\end{corollary}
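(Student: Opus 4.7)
The plan is to reduce this corollary to the preceding theorem $\prank(P) \ge B_5(P)$ by exploiting the rescaling-invariance of PSD-rank that was recorded in the preliminaries: for any nonnegative diagonal matrices $D_1, D_2$, one has $\prank(D_1 A D_2) \le \prank(A)$, and removing all-zero rows or columns leaves the PSD-rank unchanged.

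Concretely, fix an arbitrary nonnegative diagonal matrix $D$. First I would discard any rows or columns of $DP$ that are identically zero; this changes neither $\prank(DP)$ nor the expression appearing in $B'_5(P)$ (empty columns contribute nothing after normalization). Then I would form the stochastic matrix $\hat P$ by right-multiplying $DP$ by the diagonal matrix whose $k$-th entry is $1/\sum_i (DP)(i,k)$. By the rescaling observation applied on both the left (by $D$) and the right (by the column-normalizing diagonal), we get
\[
\prank(\hat P) \;\le\; \prank(P).
\]
The columns of $\hat P$ are exactly the probability distributions $(DP)_k$ from the definition of $B'_5$, and entry $\hat P(i,k) = ((DP)_k)_i$.

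Next I would apply the theorem $\prank(\hat P) \ge B_5(\hat P)$ to this stochastic matrix. Matching the definition of $B_5$ against that of $B'_5$, one sees that the summand
\[
\max_{q^{(i)}}\frac{\sum_k q^{(i)}_k \hat P(i,k)}{\sqrt{\sum_{s,t} q^{(i)}_s q^{(i)}_t F(\hat P_s, \hat P_t)^2}}
\]
is precisely the summand appearing inside the $\max_D$ in the definition of $B'_5(P)$. Therefore $\prank(P) \ge \prank(\hat P) \ge B_5(\hat P)$ for this choice of $D$, and taking the maximum over all nonnegative diagonal $D$ yields $\prank(P) \ge B'_5(P)$.

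There is no real obstacle here; the only minor care needed is the degenerate case where $D$ kills an entire column of $P$, which is handled by the removal-of-zero-column convention. The content of the corollary is thus entirely in the rescaling lemma plus the previous theorem, and the proof above is essentially a bookkeeping argument identifying $B_5$ of the column-normalization of $DP$ with the $D$-th term in the outer maximization defining $B'_5$.
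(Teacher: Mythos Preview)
Your proposal is correct and follows exactly the approach the paper intends: the paper does not even write out a proof for this corollary, treating it as an immediate consequence of the rescaling observation in the preliminaries together with the theorem $\prank(P) \ge B_5(P)$. Your bookkeeping (column-normalize $DP$, apply $B_5$, identify with the inner expression of $B'_5$, then maximize over $D$) is precisely the content that the paper leaves implicit.
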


We now give an example showing that $B_5$ can be better than $B_4$.

\begin{example}\label{ex:b5good}
Consider the following $n\times n$ nonnegative matrix $A$, where
$n=10$, and $\epsilon=0.01$.
\[ A = \begin{bmatrix}
1 & 1 & \epsilon & \cdots & \epsilon & \epsilon\\
\epsilon & 1 & 1 & \cdots & \epsilon & \epsilon\\
\epsilon & \epsilon & 1 & \cdots & \epsilon & \epsilon\\
\vdots & \vdots & \vdots & \ddots  & \vdots & \vdots\\
\epsilon & \epsilon & \epsilon & \cdots & 1 & 1\\
1 & \epsilon & \epsilon & \cdots & \epsilon & 1
\end{bmatrix}. \]
It can be verified that $B_4(A)\approx 4.81$. In order to provide a
lower bound for $B_5(A)$, for any $i$ we choose $q^{(i)}$ as
$\{0,\ldots,0,1/2,1/2,0,\ldots0\}$, where the positions of $1/2$ are
exactly the same as those of $1$ in the $i^{th}$ row of $A$.
Straightforward calculation shows that $B_5(A)\geq 5.36$, which is
better than $B_4(A)$.
\end{example}

Even $B_5$ can be quite weak in some cases. For example for the
matrix in \exampleref{ex:bad} one can show $B_5(A)<1.1$, which is
weaker than $B_1(A)\approx 3.16$.

\section{Comparisons between the bounds}

In this section we give explicit examples comparing the three new lower bounds on PSD-rank ($B_3$, $B_4$ and $B_5$) and the two that were already known ($B_1$ and $B_2$).

All our examples will only use positive entries, which trivializes all support-based
lower bound methods, i.e., methods the only look at the pattern of zero
and non-zero entries in the matrix. Note that most lower bounds on nonnegative rank
are in fact support-based (one exception is~\cite{FawziParrilo12}).
Since PSD-rank is always less than or equal to nonnegative rank, the
results obtained in the current paper could also serve as new lower
bounds for nonnegative rank that apply to arbitrary nonnegative
matrices. Serving as lower bounds
for nonnegative rank, our bounds are more coarse than the bounds in~\cite{FawziParrilo12}
(this is natural, as we focus on PSD-rank essentially,
and the gap between PSD-rank and nonnegative rank can be very
large~\cite{FMP+12}).  On the other hand, our bounds are much easier to calculate.

%Firstly, let us see two instances where the new bounds are better
%than any known results. Besides, these two instances also indicates
%that there is no stable order between the two new lower bounds.

The first example indicates that in some cases $B_4$ can be at least
quadratically better than each of $B_1$, $B_2$ and~$B_3$.

\begin{example}
\label{ex:B4Best} Consider the following $(n+1)\times(n+1)$
nonnegative matrix $A$, where $\epsilon=1/n$.
\[ A = \begin{bmatrix}
1 & \epsilon & \epsilon & \cdots & \epsilon & \epsilon\\
\epsilon & 1 & \epsilon & \cdots & \epsilon & \epsilon\\
\epsilon & \epsilon & 1 & \cdots & \epsilon & \epsilon\\
\vdots & \vdots & \vdots & \ddots  & \vdots & \vdots\\
\epsilon & \epsilon & \epsilon & \cdots & 1 & \epsilon\\
\epsilon & \epsilon & \epsilon & \cdots & \epsilon & 1
\end{bmatrix}. \]
\thmref{thm:LowerBoundforApproxId} (below) shows that
$B_4(A)=\frac{n+1}{2}$, and by straightforward calculation one can
also get that $B_1(A)=\sqrt{n}$,
$B_2(A)=\frac{n+1}{2\sqrt{n}}\approx\frac{\sqrt{n}}{2}$, and
numerical calculation indicates that $B_3(A)$ is around $4$.
\end{example}

The second example shows that $B_3$ can also be the best among the
four lower bounds $B_1,B_2,B_3,B_4$, indicating that $B_3$ and $B_4$ are
incomparable.

\begin{example}
\label{ex:B3Best} Consider the following $n\times n$ nonnegative
matrix $A$, where $n=10$, and $\epsilon=0.001$.
\[ A = \begin{bmatrix}
1 & 1 & \epsilon & \cdots & \epsilon & \epsilon\\
1 & 1 & 1 & \cdots & \epsilon & \epsilon\\
\epsilon & 1 & 1 & \cdots & \epsilon & \epsilon\\
\vdots & \vdots & \vdots & \ddots  & \vdots & \vdots\\
\epsilon & \epsilon & \epsilon & \cdots & 1 & 1\\
\epsilon & \epsilon & \epsilon & \cdots & 1 & 1
\end{bmatrix}. \]
That is, $A=(1-\epsilon)\cdot B+\epsilon\cdot J$, where $B$ is the
tridiagonal matrix with all nonzero elements being~1, and $J$ is the
all-one matrix. By straightforward calculation, we find that
$B_1(A)\approx 3.16$, $B_2(A)\approx 3.42$, $B_4(A)\approx 3.99$,
and the calculation based on uniform probability distribution $q$
shows that $B_3(A)\geq 4.52$. The result of $B_3(A)$ shows that
$\prank(A)\geq 5$.
\end{example}

Unfortunately, sometimes $B_3$ and $B_4$ can be very weak bounds%
\footnote{Even though a nonnegative matrix has the same
PSD-rank as its transposition, the bounds given by $B_3$ (or $B_4$)
can be quite different, for instance for the matrix~$A$ of~\exampleref{ex:Rescaling}.}%
%\begin{example}
%\label{ex:Transpo} Let us consider the same nonnegative matrix $A$
%in \exampleref{ex:Rescaling}. Then
%\[ A^T = \begin{bmatrix}
%1 & \epsilon & \epsilon & \cdots & \epsilon & \epsilon\\
%1 & 1 & \epsilon & \cdots & \epsilon & \epsilon\\
%1 & \epsilon & 1 & \cdots & \epsilon & \epsilon\\
%\vdots & \vdots & \vdots & \ddots  & \vdots & \vdots\\
%1 & \epsilon & \epsilon & \cdots & 1 & \epsilon\\
%1 & \epsilon & \epsilon & \cdots & \epsilon & 1
%\end{bmatrix}. \]
%Recall that by choosing uniform distribution $q$, we get that
%$B_3(A)\geq2.09$. Actually, the same choice of $q$ shows that
%$B_3(A^T)\geq4.83$. Meanwhile, it can also be verified that
%$B_4(A)=5.24$ and $B_4(A^T)=8.36$. Thus, the results are improved
%after the transposition operation.
%\end{example}
, and even the trivial rank-based bound~$B_1$ can be much better than both of them.

\begin{example}\label{ex:bad}
Consider the following $n\times n$ nonnegative matrix $A$, where
$n=10$, and $\epsilon=0.9$.
\[ A = \begin{bmatrix}
1 & \epsilon & \epsilon & \cdots & \epsilon & \epsilon\\
\epsilon & 1 & \epsilon & \cdots & \epsilon & \epsilon\\
\epsilon & \epsilon & 1 & \cdots & \epsilon & \epsilon\\
\vdots & \vdots & \vdots & \ddots  & \vdots & \vdots\\
\epsilon & \epsilon & \epsilon & \cdots & 1 & \epsilon\\
\epsilon & \epsilon & \epsilon & \cdots & \epsilon & 1
\end{bmatrix}. \]
It can be verified that $B_2(A)\approx 1.0005$, and
$B_4(A)\approx1.099$. For $B_3(A)$, numerical calculation indicates
that it is also around~$1$. However, it is easy to see that
$B_1(A)\approx 3.16$. Thus, the best lower bound is given by
$B_1(A)$, i.e., $\prank(A)\geq 4$.
\end{example}

\begin{example}\label{ex:polygon}
For slack matrices of regular polygons, the two new bounds $B_3$ and
$B_4$ are not good either, and in many cases they are at most~3.
Moreover, numerical calculations show that rescaling probably cannot
improve much. Note that the two trivial bounds $B_1$ and $B_2$ are
also very weak for these cases. As an instance, consider a slack
matrix of the regular hexagon \cite{GPT11}
\[ A = \begin{bmatrix}
0 & 0 & 1 & 2 & 2 & 1\\
1 & 0 & 0 & 1 & 2 & 2\\
2 & 1 & 0 & 0 & 1 & 2\\
2 & 2 & 1 & 0  & 0 & 1\\
1 & 2 & 2 & 1  & 0 & 0\\
0 & 1 & 2 & 2 & 1 & 0
\end{bmatrix}. \]
It can be verified that $B_1(A)\approx 1.73$, $B_2(A)\approx 1.59$,
$B_4(A)=6\times \frac{2}{6}=2$, and choosing $q$ in the definition
of $B_3(A)$ as uniform distribution gives that $B_3(A)>2.1$.
Furthermore, our numerical calculations showed that choosing other
distributions or utilizing rescaling could not improve the results
much, and never gave lower bounds $\geq3$.
\end{example}

\section{PSD-factorizations for specific functions}\label{sec:B1tight}
In this section we show the surprising power of PSD-factorizations by giving
nontrivial upper bounds on the PSD-rank of the nonequality and inner product functions.
These bounds are tight up to constant factors.

\subsection{The nonequality function}
The nonequality function defines an $n$-by-$n$ matrix $A_n$ with entries
$A_n(i,i)=0$ and $A_n(i,j)=1$ if $i \ne j$.  In other words, $A_n = J_n - I_n$ where
$J_n$ is the all-ones matrix and $I_n$ is the identity of size~$n$.
This is also known as the ``derangement matrix.'' Note that for $n>1$ it has full rank.

The basic idea of our PSD factorization is the following.  We first construct $n^2$ Hermitian matrices
$G_{ij}$ of size $n$ with spectral norm at most $1$.  Then the matrices
$I + G_{ij}$ and $I-G_{ij}$ will be positive semidefinite, and these
will form the factorization.  Note that
\[
\Tr((I+G_{ij})(I-G_{kl})^*) = \Tr(I) +\Tr(G_{ij}) - \Tr(G_{kl}^*) - \Tr(G_{ij}G_{kl}^*).
\]
Thus if we can design the $G_{ij}$ such that $\Tr(G_{ij}) =
\Tr(G_{kl})$ for all $i,j,k,l$ and $\Tr(G_{ij}G_{kl}) = \delta_{ik}
\delta_{jl} n$ (where $\delta_{ij}=1$ if $i=j$, and $\delta_{ij}=0$ otherwise),
this will give a factorization proportional to the nonequality matrix.

For the case where $n$ is odd, we are able to carry out this plan exactly.
\begin{lemma}\label{lem:odd}
Let $n$ be odd.  Then there are $n^2$ Hermitian matrices $G_{ij}$ of size $n$
such that
\begin{itemize}
  \item $\Tr(G_{ij})=\Tr(G_{kl})$ for all $i,j,k,l\in [n]:=\{0,\ldots,n-1\}$.
  \item $\Tr(G_{ij} G_{kl}^*)=\delta_{ik} \delta_{jl} n$.
  \item $G_{ij} G_{ij}^*=I_n$.
\end{itemize}
\end{lemma}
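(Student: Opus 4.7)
The plan is to construct the $G_{ij}$ as a discrete Fourier transform of Weyl (clock--shift) operators, essentially the ``phase space point operators'' familiar from the theory of discrete Wigner functions in odd dimension. Let $\omega = e^{2\pi i / n}$. Since $n$ is odd, $2$ is invertible in $\mathbb{Z}_n$, and $\tau := \omega^{(n+1)/2}$ is an $n$-th root of unity satisfying $\tau^2 = \omega$. Define $X\ket{k} = \ket{k+1 \bmod n}$ and $Z\ket{k} = \omega^k \ket{k}$, and for $(a,b) \in \mathbb{Z}_n^2$ set
\[
D(a,b) = \tau^{ab}\, X^a Z^b.
\]
From $ZX = \omega XZ$ one directly checks: $D(a,b)^\dagger = D(-a,-b)$; the Weyl composition law $D(a,b)D(c,d) = \tau^{bc - ad}\, D(a+c,\,b+d)$; and the Hilbert--Schmidt orthogonality $\Tr(D(a,b) D(c,d)^\dagger) = n\,\delta_{a,c}\delta_{b,d}$ (only $D(0,0) = I$ has nonzero trace).

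The candidate operators are
\[
G_{ij} \;=\; \frac{1}{n}\sum_{a,b \in \mathbb{Z}_n} \tau^{\,2(bi - aj)}\, D(a,b), \qquad (i,j) \in \mathbb{Z}_n^2.
\]
I would then verify the three bullets in turn. Hermiticity follows from the conjugation rule $D(a,b)^\dagger = D(-a,-b)$ combined with the change of variables $(a,b)\mapsto(-a,-b)$ in the defining sum. Equal trace: only $D(0,0) = I$ contributes to $\Tr(G_{ij})$, giving $\Tr(G_{ij}) = 1$ independent of $(i,j)$. Orthogonality $\Tr(G_{ij} G_{kl}^\dagger) = n\,\delta_{i,k}\delta_{j,l}$ reduces via the orthogonality of the $D(a,b)$ to the character sum $\tfrac{1}{n}\sum_{a,b}\tau^{\,2b(i-k) - 2a(j-l)} = n\,\delta_{i,k}\delta_{j,l}$, using that $\tau$ has order exactly $n$ (since $\gcd((n+1)/2,\,n) = 1$ for odd $n$).

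The central calculation is $G_{ij}^2 = I$, which combined with Hermiticity gives $G_{ij} G_{ij}^\dagger = I$ and also certifies spectrum $\subseteq \{\pm 1\}$. Expanding the product and applying the composition law produces a quadruple sum with phase $2(bi - aj) + 2(di - cj) + (bc - ad)$. Changing variables to $e = a+c$, $f = b+d$ and eliminating $c,d$, the phase rearranges into $2(if - je) + (be - af)$, which is \emph{linear} in $a$ and $b$ for each fixed $(e,f)$. The inner character sums $\sum_a \tau^{-af} = n\,\delta_{f,0}$ and $\sum_b \tau^{be} = n\,\delta_{e,0}$ then kill everything except $(e,f) = (0,0)$, yielding $G_{ij}^2 = \tfrac{1}{n^2}\cdot n^2 \cdot D(0,0) = I$. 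The principal technical obstacle is bookkeeping the quadratic cross-phase $\tau^{bc - ad}$ through this calculation: it is precisely this phase that forces the introduction of the square root $\tau$ of $\omega$, whose existence requires $2$ to be invertible modulo $n$, i.e.\ $n$ odd. This is why the lemma is stated for odd $n$, and why the even case will require a genuinely different approach.
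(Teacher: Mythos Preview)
Your construction is correct. The phase-space point operators you define are Hermitian, traceless apart from the identity contribution (so $\Tr(G_{ij})=1$), Hilbert--Schmidt orthonormal with squared norm $n$, and square to the identity; your change of variables $(a,b,c,d)\mapsto(a,b,e,f)$ cleanly linearises the phase so the character sums collapse.

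The paper, however, takes a much more direct combinatorial route. It builds each $G_{ij}$ as a \emph{monomial} matrix: the support is the permutation pattern $\{(k,l):k+l\equiv i\bmod n\}$ (coming from the symmetric Latin square $L(k,l)=k+l$), and the nonzero entries are the $n$th roots of unity from row $j$ of the DFT matrix, placed symmetrically to make $G_{ij}$ Hermitian. With this description all three properties are immediate: unitarity because a monomial matrix with unit-modulus entries is unitary; constant trace because exactly one entry lies on the diagonal and it equals $V(j,0)=1$; and orthogonality either by disjoint supports ($i\neq k$) or by DFT row orthogonality ($i=k$). Your approach buys a clean algebraic explanation of \emph{why} $n$ odd is needed (invertibility of $2$ so that a square root $\tau$ of $\omega$ exists within the $n$th roots), and connects the lemma to the discrete Wigner function literature. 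The paper's approach buys simplicity: no composition law, no double Fourier sum, and each property is a one-line check. It is worth noting that your $G_{ij}$, once the sum over $b$ is evaluated, is itself a monomial matrix supported on $\{(k,l):k+l\equiv -2i\bmod n\}$ with entries $\omega^{-(k-l)j}$, so the two constructions are in fact reparametrisations of one another; the difference is purely in presentation and verification.
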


\begin{proof}
We will use two auxiliary matrices in our construction.  We will label matrix entries from $[n]$.
Let $L$ be the addition table of $\mathbb{Z}_n$, that is $L(i,j)=i+j
\mod n$.  Notice that $L$ is a symmetric Latin square\footnote{A Latin square is
an $n$-by-$n$ matrix in which each row and each column is a permutation of $[n]$.}
with distinct entries along the main diagonal.  Let $V$ be the Vandermonde matrix
that is $V(k,l)=e^{-2kl\pi\rm{i} / n}$ for $k, l \in [n]$.  Note that
$VV^*=nI_n$.

We now define the matrices $G_{ij}$ for $i,j \in [n]$.  The matrix
$G_{ij}$ will be nonzero only in those entries where $L(k,l)=i$.
Thus the zero/nonzero pattern of each $G_{ij}$ forms a permutation
matrix with exactly one~1 on the diagonal.  These nonzero entries
will be filled in from the $j$th row of $V$.  We do this in a way to
ensure that $G_{ij}$ is Hermitian.  Thus $V(j,0)=1$ will be placed on
the diagonal entry of $G_{ij}$.  Now fix an ordering of the
$\floor{n/2}$ other pairs $(k,l),(l,k)$ of nonzero entries of
$G_{ij}$ (say that each $(k,l)$ is above the diagonal).  In the
$t^{th}$ such pair we put the conjugate pair $V(j,t), V(j,n-t)$.  In
this way, $G_{ij}$ is Hermitian, and as the ordering is the same for
all $j$ we have that $\Tr(G_{ij} G_{ik}^*) = \braket{V_i}{V_k}=n
\delta_{i,k}$.

To finish, we check the other properties.  Each $G_{ij}$ has trace
one. If $i \ne k$ then $\Tr(G_{ij} G_{kl}^*)=0$ as the zero/nonzero
patterns are disjoint.  Finally as the zero/nonzero pattern of each
$G_{ij}$ is a permutation matrix, and entries are roots of unity,
$G_{ij} G_{ij}^*=I_n$.
\end{proof}

This gives the following theorem for the $n^2$-by-$n^2$ nonequality matrix.

\begin{theorem}\label{thm:NEOdd}
Suppose $n$ is odd, and let $A_{n^2}$ be nonequality matrix of size $n^2$.  Then $\prank(A_{n^2}) \leq n$.
\end{theorem}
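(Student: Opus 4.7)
The plan is to assemble the $G_{ij}$ matrices produced by \lemref{lem:odd} into a PSD-factorization of $A_{n^2}$ of size $n$. I would index the rows and columns of $A_{n^2}$ by pairs $(i,j)\in[n]\times[n]$, so that $A_{n^2}((i,j),(k,l))=1-\delta_{ik}\delta_{jl}$.

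First I would observe that each $G_{ij}$ is a Hermitian matrix satisfying $G_{ij}G_{ij}^*=I_n$, so $G_{ij}$ is both Hermitian and unitary; hence its eigenvalues are $\pm 1$ and its spectral norm is exactly $1$. This makes both $I_n+G_{ij}$ and $I_n-G_{ij}$ positive semidefinite $n\times n$ matrices. I would then propose the factorization
\[
E_{(i,j)} = I_n + G_{ij}, \qquad F_{(k,l)} = \tfrac{1}{n}\bigl(I_n - G_{kl}\bigr),
\]
both of which are $n\times n$ PSD matrices.

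Next I would compute the inner product. Using that $G_{kl}$ is Hermitian (so $G_{kl}^*=G_{kl}$) and linearity of trace,
\[
\Tr(E_{(i,j)} F_{(k,l)}) = \tfrac{1}{n}\bigl(\Tr(I_n) + \Tr(G_{ij}) - \Tr(G_{kl}) - \Tr(G_{ij}G_{kl}^*)\bigr).
\]
By the first bullet of \lemref{lem:odd}, $\Tr(G_{ij})=\Tr(G_{kl})$, so the middle two terms cancel. By the second bullet, $\Tr(G_{ij}G_{kl}^*)=\delta_{ik}\delta_{jl}\,n$. Thus
\[
\Tr(E_{(i,j)} F_{(k,l)}) = \tfrac{1}{n}\bigl(n - \delta_{ik}\delta_{jl}\,n\bigr) = 1 - \delta_{ik}\delta_{jl} = A_{n^2}\bigl((i,j),(k,l)\bigr),
\]
as required.

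There is essentially no obstacle at this stage, since \lemref{lem:odd} has been designed precisely to make the three conditions needed (cancellation of the trace terms, orthogonality on the cross term, and unit spectral norm) hold simultaneously. The only thing to double-check is that dividing one of the two families by the scalar $n$ does not affect positive semidefiniteness, which is immediate. This yields $n^2$ matrices on each side, each of size $n\times n$, forming a valid PSD-factorization of $A_{n^2}$, and hence $\prank(A_{n^2})\le n$.
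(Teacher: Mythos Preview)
Your proposal is correct and essentially identical to the paper's proof: both build the factorization from $I\pm G_{ij}$ using \lemref{lem:odd}, invoke the unit spectral norm for positive semidefiniteness, and carry out the same trace computation. The only cosmetic difference is that the paper distributes the normalizing constant as $1/\sqrt{n}$ on each factor rather than $1/n$ on one side, and writes $G_{kl}^*$ where you write $G_{kl}$ (equivalent since the matrices are Hermitian).
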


\begin{proof}
Suppose $n^2$ Hermitian matrices $G_{ij}$ have been constructed as
in \lemref{lem:odd}. We now define the matrices
$X_{ij}=(1/\sqrt{n})(I + G_{ij})$ and
$Y_{ij}=(1/\sqrt{n})(I-G_{ij}^*)$. Note that the spectral norm of
each $G_{ij}$ is~1, so $X_{ij}$ and $Y_{ij}$ are PSD.
Also, we have
\begin{align*}
\Tr(X_{ij} Y_{kl}) &= \frac{1}{n} \left( \Tr(I) +\Tr(G_{ij}) - \Tr(G_{kl}^*) - \Tr(G_{ij} G_{kl}^*) \right) \\
&= \frac{1}{n} \left ( n - \delta_{ik} \delta_{jl} n \right) = 1 - \delta_{ik} \delta_{jl}.
\end{align*}
\vspace*{-1em}
\end{proof}

We now turn to the case that $n$ is even. The result is slightly worse here.
\begin{lemma}\label{lem:NEEven}
Let $n$ be even.  Then there are $n^2-1$ Hermitian matrices $G_{ij}$
such that
\begin{itemize}
  \item $\Tr(G_{ij})=\Tr(G_{kl})$ for all $i,j,k,l$.
  \item $\Tr(G_{ij} G_{kl}^*)=\delta_{ik} \delta_{jl} n$.
  \item $G_{ij} G_{ij}^*=I_n$.
\end{itemize}
\end{lemma}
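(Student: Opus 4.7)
I would mirror the construction of \lemref{lem:odd}, keeping the symmetric Latin square $L(k,l)=k+l\bmod n$ and the Vandermonde matrix $V(k,l)=e^{-2\pi\mathrm{i}kl/n}$, and building each $G_{ij}$ as a Hermitian matrix supported on the antidiagonal $D_i=L^{-1}(i)$ with unit-modulus entries drawn from row~$j$ of~$V$. The fundamental obstruction for even~$n$ is that the diagonal of~$L$ is no longer a transversal: $2k\bmod n$ covers each even residue twice and each odd residue zero times, so no $G_{ij}$ can be assigned exactly one diagonal unit contributing a trace of~$1$ in the manner of the odd case. I would therefore aim for \emph{traceless} $G_{ij}$'s, which automatically satisfies the equal-trace conclusion.

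Concretely, for odd~$i$ the antidiagonal $D_i$ misses the main diagonal and decomposes into $n/2$ off-diagonal conjugate pairs $\{(k,i-k),(i-k,k)\}$; fill each such pair with a conjugate Vandermonde pair $(V(j,t),V(j,-t))$ for $t=1,\ldots,n/2-1$, and treat the two self-conjugate columns $V(j,0)=1$ and $V(j,n/2)=(-1)^j$ as a distinguished ``real slot'' placed at one fixed off-diagonal pair of~$D_i$, with the sign choice keyed to the parity of~$j$. For even~$i$, $D_i$ meets the diagonal in the two cells $(i/2,i/2)$ and $(i/2+n/2,i/2+n/2)$, which by Hermiticity, unit-modulus, and tracelessness must carry the real values~$+1$ and~$-1$ in one of two orders; the remaining $(n-2)/2$ off-diagonal conjugate pairs are filled from the non-self-paired Vandermonde conjugate columns as above. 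Tracelessness is then built in, $G_{ij}G_{ij}^*=I_n$ follows because the support of~$G_{ij}$ is a permutation pattern with unit-modulus entries, and the Frobenius orthogonality $\Tr(G_{ij}G_{kl}^*)=\delta_{ik}\delta_{jl}n$ is trivial when $i\neq k$ (disjoint supports) and, when $i=k$, reduces to the orthogonality of rows of~$V$ up to a small correction from the self-conjugate columns.

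The main obstacle is exactly this correction. Because $V(j,0)$ and $V(j,n/2)$ are each self-conjugate, neither can fill the role of a generic complex conjugate pair at an off-diagonal cell of~$D_i$ without either breaking Hermiticity (the two values at $(k,l)$ and $(l,k)$ would have to be distinct reals) or producing a residual cross term of the shape $1\pm(-1)^{j+j'}$ that does not vanish on every $j\neq j'$. Coordinating the sign choice on the two diagonal cells of each even-$i$ antidiagonal with the real-slot placement on each odd-$i$ antidiagonal, so that this cross term cancels in the inner product between any two distinct $G$'s, is the delicate combinatorial step of the proof; and the bookkeeping ultimately leaves exactly one index pair $(i,j)$ uncovered, which is the source of the ``$-1$'' in the count $n^2-1$.
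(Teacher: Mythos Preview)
Your overall plan---mirror the odd-$n$ construction, aim for traceless $G_{ij}$'s, and handle the self-conjugate Vandermonde entries specially---is exactly the paper's strategy. But there is a genuine gap: your choice of Latin square $L(k,l)=k+l\bmod n$ cannot yield $n^2-1$ matrices.

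Here is why. For each even $i$, the antidiagonal $D_i$ meets the main diagonal in two cells and has $n/2-1$ off-diagonal conjugate pairs, so the real vector space of Hermitian matrices supported on $D_i$ has dimension $2+2(n/2-1)=n$, and the traceless subspace has dimension $n-1$. Hence at most $n-1$ pairwise Frobenius-orthogonal traceless matrices can be found on each such $D_i$. Summing over the $n/2$ even $i$'s and the $n/2$ odd $i$'s (where the support is entirely off-diagonal, so at most $n$ matrices each), you get at most
\[
\tfrac{n}{2}\cdot(n-1)+\tfrac{n}{2}\cdot n \;=\; n^2-\tfrac{n}{2},
\]
which falls short of $n^2-1$ for every even $n\ge 4$. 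So the ``delicate combinatorial step'' you allude to cannot leave only \emph{one} index pair uncovered; it must leave at least $n/2$.

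The paper's fix is to change the Latin square, not to fight this obstruction. For even $n$ one takes a symmetric Latin square $L$ whose diagonal is \emph{constant}, say all zeros (such a square exists for every even $n$; its off-diagonal part is a one-factorization of $K_n$). Then $D_0$ is the entire main diagonal, while every other $D_i$ is entirely off-diagonal. For $i=0$ one sets $G_{0j}=\diag(V_j)$, which is Hermitian, unitary, and traceless for all $j\ne 0$; this gives $n-1$ matrices and the single excluded index is $(0,0)$. For $i\ne 0$ the support is off-diagonal and trace is automatically zero; when $j$ is odd the non-conjugate real pair $(V(j,0),V(j,n/2))=(1,-1)$ is replaced by $(\mathrm{i},-\mathrm{i})$, which restores Hermiticity and one checks preserves the Vandermonde row-orthogonality. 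That yields $n$ matrices for each of the $n-1$ nonzero $i$'s, and $(n-1)+(n-1)n=n^2-1$ in total.
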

\begin{proof}
The construction is similar.  Again let $V$ be the Vandermonde
matrix of roots of unity and this time let $L$ be a symmetric Latin
square with entries from $[n]$ where the diagonal has all entries
$0$.

For $i>0$, the matrix $G_{ij}$ is defined as before, with the additional
subtlety that if $j$ is odd then $V(j,0)=1$ and $V(j,n/2)=-1$ and
instead of taking this pair we use $(\rm{i},-\rm{i})$ in the
matrix.

For $i=0$ we use all the rows of $V$ except $V_0$,  the all-one row,
to ensure that the trace of all $G_{ij}$ is zero (this is why we can only
create $n^2-1$ matrices).
\end{proof}

As with the case where $n$ is odd, we have the following theorem
based on \lemref{lem:NEEven}.
\begin{theorem}\label{thm:NEEven}
Suppose $n$ is even, and let $A_{n^2-1}$ be the nonequality matrix
of size $n^2-1$. Then it holds that $\prank(A_{n^2-1}) \leq n$.
\end{theorem}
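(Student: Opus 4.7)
The plan is to mimic the argument for \thmref{thm:NEOdd} essentially verbatim, using the $n^2-1$ matrices produced by \lemref{lem:NEEven} in place of the $n^2$ matrices produced by \lemref{lem:odd}. First, I would index the rows and columns of $A_{n^2-1}$ by the pairs $(i,j)$ for which $G_{ij}$ is defined (there are exactly $n^2-1$ such pairs), and define
\[
X_{ij} = \frac{1}{\sqrt{n}}\bigl(I_n + G_{ij}\bigr), \qquad Y_{ij} = \frac{1}{\sqrt{n}}\bigl(I_n - G_{ij}^*\bigr).
\]
These are $n\times n$ Hermitian matrices, so I need to check they are PSD and compute the pairwise traces.

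For positive semidefiniteness I would use the third property $G_{ij} G_{ij}^* = I_n$ together with the fact that $G_{ij}$ is Hermitian: this forces every eigenvalue of $G_{ij}$ to satisfy $\lambda^2 = 1$, hence $\lambda \in \{-1,+1\}$, so the spectral norm of $G_{ij}$ is~$1$ and both $I_n \pm G_{ij}$ are PSD (and the same for $\pm G_{ij}^*$).

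Next I would compute, using linearity of trace,
\[
\Tr(X_{ij} Y_{kl}) = \frac{1}{n}\Bigl(\Tr(I_n) + \Tr(G_{ij}) - \Tr(G_{kl}^*) - \Tr(G_{ij} G_{kl}^*)\Bigr).
\]
By the first property of \lemref{lem:NEEven}, $\Tr(G_{ij}) = \Tr(G_{kl}) = \Tr(G_{kl}^*)$ (in fact equal to~$0$ in this construction, as noted in the proof of the lemma), so those two terms cancel. By the second property, $\Tr(G_{ij} G_{kl}^*) = \delta_{ik}\delta_{jl} n$. Therefore
\[
\Tr(X_{ij} Y_{kl}) = \frac{1}{n}\bigl(n - \delta_{ik}\delta_{jl} n\bigr) = 1 - \delta_{ik}\delta_{jl} = A_{n^2-1}\bigl((i,j),(k,l)\bigr),
\]
which exhibits a PSD-factorization of $A_{n^2-1}$ of size~$n$ and completes the proof.

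I do not expect any serious obstacle here: the hard work was done in \lemref{lem:NEEven}, where the parity-of-$n$ adjustments (replacing the conjugate pair $(V(j,0),V(j,n/2)) = (1,-1)$ by $(\rm{i},-\rm{i})$ for odd $j$, and dropping the all-ones row $V_0$ when $i=0$) were needed precisely to preserve the three bulleted properties; once those hold, the factorization step is a routine four-term trace calculation identical to the odd case.
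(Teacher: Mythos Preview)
Your proposal is correct and is exactly the approach the paper takes: it simply states that the even case follows from \lemref{lem:NEEven} in the same way \thmref{thm:NEOdd} followed from \lemref{lem:odd}, without even writing out the trace computation. Your write-up is in fact more detailed than the paper's, which leaves the verification implicit.
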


\noindent
The nonequality function gives a family of matrices where PSD-rank is smaller than the real PSD-rank.

\begin{example}
\label{realgap} We have seen that for odd~$n$, the PSD-rank of the
nonequality matrix of size $n^2$ is at most~$n$.  This is tight by
\factref{ft:trivial}, since the rank of the nonequality matrix of
this size is $n^2$. On the other hand, also by \factref{ft:trivial},
the real PSD-rank is at least $\ceil{\sqrt{2}n-1/2}$, and actually this
bound has been shown to be tight~\cite[Example~5.1]{FGPRT14}. This shows a
multiplicative gap of approximately $\sqrt{2}$ between the real and
complex PSD-rank.

Fawzi et al.~\cite[Section~2.2]{FGPRT14} independently observed that the real and
complex PSD-rank are not the same, showing that the $4$-by-$4$
derangement matrix has complex PSD-rank~$2$, while by \factref{ft:trivial}
the real PSD-rank is at least $3$.  
\end{example}

%Considering that for any $n$, $A_n$ is full-rank, \thmref{thm:NEOdd}
%and \thmref{thm:NEEven} show that the lower bound $B_1$ is tight or
%almost tight for the nonequality function.

%It is worth noticing that we can also choose other constructions in
%the above two theorems. For example, if $n$ has the form that
%$n=4^k-1$, where $k$ is a positive integer, then $A_n$ has a more
%elegant PSD-factorization that shows that $B_1$ is almost tight.
%\begin{theorem}\label{thm:4kOfNE}
%Let $n=4^k -1$.  Then $\prank(J_n-I_n) = 2^k$
%\end{theorem}

%\begin{proof}
%For the lower bound note that $\prank(A) \ge \ceil{\sqrt{\rank(A)}}$
%and $\rank(J_n-I_n)=n$.  Thus $\prank(J_n-I_n) \ge
%\ceil{\sqrt{4^k-1}}=2^k$.

%Now we turn to the upper bound. For each $x \in \{0,1,2,3\}^k
%\setminus \{0^k\}$ define a matrix $G_x = \bigotimes_j
%\sigma_{x_j}$, where $\sigma_{x_j}$'s are the Pauli matrices. This
%is a set of $4^k-1$ matrices of size $2^k$-by-$2^k$ satisfying
%$\Tr(G_x)=0$ and $\Tr(G_x G_y)= 2^k \delta_{x,y}$.

%Now define
%\[
%A_x=\frac{1}{\sqrt{2^k}}( \sigma_0^{\otimes k} +G_x ), \
%B_y=\frac{1}{\sqrt{2^k}}( \sigma_0^{\otimes k} - G_y )
%\]
%These matrices are positive semidefinite as the spectral norm
%$\|G_x\| \le 1$.  Finally we have
%\begin{align*}
%\Tr(A_x B_y) &= \frac{1}{2^k} \left( \Tr(\sigma_0^{\otimes k}) + \Tr(G_x)+\Tr(G_y) -\Tr(G_xG_y) \right) \\
%& = \begin{cases}
%0 & \text{if } x = y \\
%1 & \text{if } x \ne y
%\end{cases}
%\end{align*}
%\end{proof}
It should be pointed out that the results in the current subsection
reveal a fundamental difference between PSD-rank and the normal
rank. Recall that for the normal rank we have that $\rank(A-B) \geq
\rank(B) - \rank(A)$. Thus if $A$ is a rank-one matrix, the ranks of
$A-B$ and $B$ cannot be very different. The results above, on the
other hand, indicate that the situation is very different for
PSD-rank, where $A-B$ and $B$ can have vastly different PSD-ranks
even for a rank-one matrix~$A$. This fact shows that the PSD-rank is not
as robust to perturbations as the normal rank, a contributing
reason to why the PSD-rank is difficult to bound.

\begin{proposition}
For every positive integer $d$, there exists a nonnegative
matrix $A$, such that $J-A$ is also nonnegative, and
\[
\left|\prank(J-A)-\prank(A)\right|>d,
\]
where $J$ is the all-one matrix.
\end{proposition}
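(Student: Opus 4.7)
The plan is to exploit the nonequality matrix constructions from Theorems \ref{thm:NEOdd} and \ref{thm:NEEven}, which already show that $J-I$ (of suitable size) has very small PSD-rank. Concretely, I would take $A$ to be an identity matrix of appropriate size, so that $J-A$ is exactly a nonequality matrix, and then compare the PSD-ranks of $A$ and $J-A$.

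In more detail, fix an odd integer $n$ large enough that $n^2 - n > d$, and set $A = I_{n^2}$, the $n^2$-by-$n^2$ identity. Then $J - A$ is nonnegative (entries are $0$ on the diagonal and $1$ off the diagonal), and equals the nonequality matrix $A_{n^2}$ of size $n^2$. By \thmref{thm:NEOdd}, $\prank(J-A) \leq n$. On the other hand, $A$ is a nonnegative diagonal matrix, so by the remark following \factref{ft:mutual} (that the PSD-rank of a nonnegative diagonal matrix equals its usual rank), we have $\prank(A) = n^2$. Consequently
\[
\left|\prank(J-A) - \prank(A)\right| \;\geq\; n^2 - n \;>\; d,
\]
as required. (For even $n$ one could alternatively use \thmref{thm:NEEven} with $A = I_{n^2-1}$.)

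There is essentially no obstacle here beyond verifying that the two pieces fit together: the lower bound on $\prank(I_{n^2})$ is immediate from the diagonal-matrix fact, and the upper bound on $\prank(J - I_{n^2})$ is exactly the content of \thmref{thm:NEOdd}. The conceptual point being illustrated is precisely the one highlighted in the paragraph preceding the proposition, namely that subtracting the rank-one matrix $A = I_{n^2}$ (of PSD-rank $n^2$) from $J$ can collapse the PSD-rank from essentially $n^2$ down to $n$, in sharp contrast to the behaviour of the ordinary rank.
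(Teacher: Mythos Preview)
Your proof is correct and essentially identical to the paper's: both take $A$ to be the identity matrix and use that $\prank(I_n)=n$ while $\prank(J-I_n)\approx\sqrt{n}$ via the nonequality construction. One minor slip in your final commentary: $I_{n^2}$ is not a rank-one matrix---it is $J$ that has rank one, so the contrast with ordinary rank concerns perturbing by the rank-one matrix $J$, not by $I$.
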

\begin{proof}
Choose $A=I$, and the size to be $n$, then we have that
$\prank(J-A)\approx\sqrt{n}$, while $\prank(A)=n$.  Choosing
$n$ large enough gives the desired separation.
\end{proof}

\subsection{Approximations of the identity}
Here we first consider the PSD-rank of \emph{approximations} of
the identity.  We say that an $n$-by-$n$ matrix $A$ is an
$\epsilon$-approximation of the identity if $A(i,i)=1$ for all $i
\in [n]$ and $0 \le A(i,j) \le \epsilon$ for all $i \ne j$. The
usual rank of approximations of the identity has been well studied
\cite{Alo09}.

In particular, it is easy to show that if $A$ is an
$\epsilon$-approximation of the identity then
\[
\rank(A) \ge \frac{n}{1+\epsilon^2(n-1)}.
\]
Using the bound $B_4$ we can show a very analogous result for PSD-rank.

\begin{theorem}\label{thm:LowerBoundforApproxId}
If an $n$-by-$n$ matrix~$A$ is an $\epsilon$-approximation
of the identity, then
\[
\prank(A) \ge \frac{n}{1+ \epsilon (n-1)}.
\]
In particular, if $\epsilon \le 1/n$ then $\prank(A) > n/2$.
\end{theorem}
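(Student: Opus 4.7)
The plan is to apply the bound $B_4$ (or more precisely its rescaled version $B_4'$) after normalizing the columns of $A$ so as to make it stochastic. Since PSD-rank is invariant under multiplying rows or columns by positive constants, this normalization is free.

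Concretely, I would let $c_j = \sum_i A(i,j)$ be the $j$-th column sum of $A$. By the defining property of an $\epsilon$-approximation of the identity, $A(j,j) = 1$ and $A(i,j) \le \epsilon$ for $i \ne j$, so
\[
1 \le c_j \le 1 + \epsilon(n-1).
\]
Let $P$ be the stochastic matrix obtained from $A$ by dividing each column $j$ by $c_j$. Then $\prank(A) = \prank(P)$, and by \thmref{thm:LowerBoundforApproxId}'s setup we can apply the bound $B_4(P) = \sum_i \max_j P(i,j)$.

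For each row $i$, the diagonal entry contributes
\[
\max_j P(i,j) \;\ge\; P(i,i) \;=\; \frac{A(i,i)}{c_i} \;=\; \frac{1}{c_i} \;\ge\; \frac{1}{1 + \epsilon(n-1)}.
\]
Summing over the $n$ rows yields
\[
\prank(A) \;\ge\; B_4(P) \;\ge\; \sum_{i=1}^n \frac{1}{1+\epsilon(n-1)} \;=\; \frac{n}{1+\epsilon(n-1)},
\]
which is the desired lower bound. For the ``in particular'' clause, substituting $\epsilon \le 1/n$ gives $1 + \epsilon(n-1) \le 1 + (n-1)/n = (2n-1)/n$, hence $\prank(A) \ge n^2/(2n-1) > n/2$, where the strict inequality holds since $2n^2 > 2n^2 - n$.

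There is no real obstacle here beyond recognizing that the diagonal entries of $A$ already force the column-max of the normalized matrix to be at least $1/(1+\epsilon(n-1))$; the proof is essentially a direct computation once one invokes $B_4$ with the natural column rescaling.
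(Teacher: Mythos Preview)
Your proof is correct and is essentially identical to the paper's own argument: normalize each column of $A$ to obtain a stochastic matrix $P$, observe that each column sum is at most $1+\epsilon(n-1)$ so that the diagonal entry $P(i,i)\ge 1/(1+\epsilon(n-1))$, and apply $B_4$. Your writeup is in fact more detailed than the paper's, which leaves the ``in particular'' clause implicit and phrases the key observation in terms of column maxima rather than row maxima (though the diagonal makes these coincide).
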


\begin{proof}
We first normalize each column of $A$ to a probability distribution,
obtaining a stochastic matrix $P$.  Each column will be divided by a
number at most $1+\epsilon(n-1)$.  Thus the largest entry of each
column is at least $1/(1+\epsilon(n-1))$.  Hence the method $B_4$
gives the claimed bound.
\end{proof}

%\begin{theorem}
%Let $M$ be a $n$-by-$n$ matrix that is an $\epsilon$-approximation of the identity.  Then
%\[
%\prank(M) \ge
%\]
%\end{theorem}
%
%\begin{proof}
%We apply the same trick used by Alon\cite{Alo09} in the case of rank.  Suppose that $\prank(M)=r$.
%Choose $k$ such that $\epsilon^k \le 1/n$.  Taking $k = \log(n)/\log(1\epsilon)$ suffices.  Thus
%$M^{\circ k}$, the entrywise product
%of $M$ with itself $k$ times ,will be a $1/n$-approximation of the identity of rank at most $r^k$.  By
%\thmref{thm:LowerBoundforApproxId} we have $r^k \ge n/2$ implying
%\end{proof}

We now show that this bound is tight in the case of small
$\epsilon$.  If $\epsilon \ge 1/(n-1)^2$, then by \thmref{thm:not_full} the PSD-rank
of the $n$-by-$n$ matrix with ones on the diagonal and $\epsilon$ off the diagonal
is not full.  On the other hand, if $\epsilon < 1/(n-1)^2$ then any $\epsilon$-approximation
of the identity has full PSD-rank, by \thmref{thm:LowerBoundforApproxId}.  This gives
the following proposition.

\begin{proposition}\label{prop:full}
Suppose $A(i,i)=1$ for all $i \in [n]$ and $A(i,j)=\epsilon$ for $i
\ne j$, then $\prank(A)=n$ if and only if $\epsilon<1/(n-1)^2$.
\end{proposition}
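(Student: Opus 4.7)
The plan is to prove the two implications separately, each as a direct application of a result already established in the paper.

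For the ``if'' direction, suppose $\epsilon < 1/(n-1)^2$. Then $\epsilon(n-1) < 1/(n-1)$, so $1 + \epsilon(n-1) < n/(n-1)$, and \thmref{thm:LowerBoundforApproxId} (applied to this $\epsilon$-approximation of the identity) yields
\[
\prank(A) \ge \frac{n}{1+\epsilon(n-1)} > n-1.
\]
Since $\prank(A)$ is an integer bounded above by $n$, the matrix size, it must equal $n$.

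For the ``only if'' direction, I argue the contrapositive and assume $\epsilon \ge 1/(n-1)^2$. Let $A'$ be the entry-wise square root of $A$, so $A'(i,i)=1$ and $A'(i,j)=\sqrt{\epsilon}$ for $i\ne j$. I will verify that no column of $A'$ has a dominant entry, so that \thmref{thm:not_full} applies. In any fixed column: the diagonal entry $1$ is not dominant because the remaining entries sum to $(n-1)\sqrt{\epsilon}\ge 1$ by the hypothesis; and (for $n\ge 3$ and $\epsilon\le 1$, the range of interest) each off-diagonal entry $\sqrt{\epsilon}$ is not dominant because the remaining entries sum to $1+(n-2)\sqrt{\epsilon}\ge \sqrt{\epsilon}$. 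Then \thmref{thm:not_full} delivers $\prank(A)<n$.

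There is no genuine obstacle: both directions are one-line consequences of the earlier theorems. The entire content of the proposition lies in the observation that the two criteria meet exactly at the threshold $\epsilon = 1/(n-1)^2$, which is precisely where the diagonal $1$ of $A'$ stops strictly exceeding the off-diagonal sum $(n-1)\sqrt{\epsilon}$. This matching threshold certifies simultaneously that the $B_4$-based lower bound in \thmref{thm:LowerBoundforApproxId} and the planar-geometry sufficient condition of \thmref{thm:not_full} are both sharp on this family.
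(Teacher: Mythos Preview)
Your proof is correct and follows exactly the paper's approach: the ``if'' direction is obtained from \thmref{thm:LowerBoundforApproxId} and the contrapositive of the ``only if'' direction from \thmref{thm:not_full}. The paper states this argument tersely in the paragraph immediately preceding the proposition, while you have spelled out the arithmetic verifying that the threshold $\epsilon=1/(n-1)^2$ is precisely where both theorems switch behavior.
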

\begin{proposition}\label{prop:divide}
Let $m$ divide $n$ and consider the $m$-by-$m$ matrix $B$ where
$B(i,i)=1$ and $B(i,j)=1/(m-1)^2$.  Then $A=I_{n/m}\otimes B$ is an
$\epsilon$-approximation of the identity, and $\prank(A) \le n-
\frac{n}{m}$, where $\epsilon = 1/(m-1)^2$.
\end{proposition}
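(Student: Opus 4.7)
The plan is to verify the two assertions separately: first, that $A$ is in fact an $\epsilon$-approximation of the identity for $\epsilon = 1/(m-1)^2$; second, that its PSD-rank is bounded by $n - n/m$. Both reduce to direct applications of earlier results.

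For the first part, I would index the rows and columns of $A$ by pairs $(k,i)$ with $k \in [n/m]$ and $i \in [m]$, so that $A((k,i),(k',j)) = (I_{n/m})_{k,k'} \cdot B(i,j)$. The diagonal entries are $B(i,i) = 1$. The off-diagonal entries split into two cases: when $k \ne k'$ the entry is~$0$, while when $k = k'$ and $i \ne j$ the entry equals $B(i,j) = 1/(m-1)^2 = \epsilon$. Either way the off-diagonal entries lie in $[0,\epsilon]$, confirming that $A$ is an $\epsilon$-approximation of the identity.

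For the PSD-rank bound, the key observation is that at the threshold value $\epsilon = 1/(m-1)^2$ the smaller matrix $B$ itself has non-full PSD-rank. This is exactly what \propref{prop:full} gives: since $\epsilon = 1/(m-1)^2$ fails the strict inequality $\epsilon < 1/(m-1)^2$, we conclude $\prank(B) < m$, i.e., $\prank(B) \le m-1$. (Equivalently, one can observe directly that every column of the entrywise square root of $B$ has the form $(1, 1/(m-1), \dots, 1/(m-1))$ in some order, so the maximal entry~$1$ equals the sum of the remaining entries and thus is not dominant, whence \thmref{thm:not_full} applies.)

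Combining this with \lemref{lem:tensor} and the standard fact that $\prank(I_{n/m}) = n/m$ (diagonal matrices have PSD-rank equal to their ordinary rank, as noted after \factref{ft:mutual}), I would conclude
\[
\prank(A) \;=\; \prank(I_{n/m} \otimes B) \;\le\; \prank(I_{n/m}) \cdot \prank(B) \;\le\; \tfrac{n}{m}(m-1) \;=\; n - \tfrac{n}{m}.
\]
There is essentially no obstacle; the proposition is a short corollary of the preceding \propref{prop:full} and \lemref{lem:tensor}, whose purpose is to show that the lower bound of \thmref{thm:LowerBoundforApproxId} is nearly tight for $\epsilon$ of the form $1/(m-1)^2$.
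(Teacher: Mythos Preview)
Your proof is correct and follows exactly the paper's approach: the paper's one-line proof simply invokes \lemref{lem:tensor} together with $\prank(B)=m-1$, and you have spelled out the details (the $\epsilon$-approximation check, why $\prank(B)\le m-1$ via \propref{prop:full} or \thmref{thm:not_full}, and $\prank(I_{n/m})=n/m$). There is nothing to add.
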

\begin{proof}
Consider \lemref{lem:tensor} and the fact that $\prank(B)=m-1$.
\end{proof}

As a generalization of approximations of the identity with the same
off-diagonal entries, we now turn to consider the PSD-rank of the
following class of matrices.
\[ M_c = \begin{bmatrix}
c & 1 & 1 & \cdots & 1 & 1\\
1 & c & 1 & \cdots & 1 & 1\\
1 & 1 & c & \cdots & 1 & 1\\
\vdots & \vdots & \vdots & \ddots  & \vdots & \vdots\\
1 & 1 & 1 & \cdots & c & 1\\
1 & 1 & 1 & \cdots & 1 & c
\end{bmatrix}, \]
where $c$ could be any nonnegative real number, and suppose the size
of $M_c$ is $n$-by-$n$. For $c=0$, $M_c$ is
exactly the matrix corresponding to the Nonequality function.
Besides, if $c>(n-1)^2$, \propref{prop:full} implies that the
PSD-rank of $M_c$ will be full. In both of these two cases, our
results are very tight. Then a natural question is, how about the
case when $0<c<(n-1)^2$ (excluding $c=1$)? For this case, it turns out that we
have the following theorem. Combined with $B_1(M_c)=\sqrt{n}$, this
result indicates that when $c$ is not very large, $\prank(M_c)$ is
very small, which is much stronger than \propref{prop:divide}.
\begin{theorem}\label{thm:PsdRankOfMc}
If $c>2$, $\rprank(M_c) \leq 2\lceil
c\rceil\cdot\lceil\sqrt{n}\rceil$. If $c\in[0,2]$, $\rprank(M_c)
\leq \lceil\sqrt{2n}\rceil+1$.
\end{theorem}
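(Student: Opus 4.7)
The plan is to invoke the real-field analogue of \lemref{lem:1D}: any real matrix $N \in \R^{n\times n}$ with $N \circ N = M_c$ satisfies $\rprank(M_c) \leq \rank(N)$. Since $M_c$ has $c$ on the diagonal and $1$ off-diagonal, this amounts to constructing a real matrix with entries $\pm\sqrt c$ on the diagonal and $\pm 1$ off-diagonal whose rank meets the claimed bound. Equivalently, via rank-$1$ PSD factors $E_i = F_i = u_i u_i^T$, I seek vectors $u_i \in \R^r$ satisfying $\|u_i\|^2 = \sqrt c$ and $|u_i^T u_j| = 1$ for $i \neq j$; after normalization these form $n$ equiangular unit vectors in $\R^r$ at angle $\arccos(1/\sqrt c)$.

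For the case $c \in [0,2]$, the absolute bound for equiangular lines gives $n \leq r(r+1)/2$, forcing $r \geq \lceil\sqrt{2n}\rceil$, which essentially matches the claimed bound. When $c = 1$ all $u_i$ may be taken parallel, giving $r=1$. For $c \in (1,2]$ I would invoke equiangular tight frame constructions at angle $\arccos(1/\sqrt c)$ in dimension $\lceil\sqrt{2n}\rceil$. For $c \in [0,1)$ the rank-$1$ factorization is infeasible (since $\|u_i\|^2 = \sqrt c < 1$ cannot satisfy $|u_i^T u_j| = 1$), so I would start from a real analogue of the nonequality factorization of \lemref{lem:odd} and \lemref{lem:NEEven}, producing a rank-$\lceil\sqrt{2n}\rceil$ real matrix $S$ with $S \circ S = J - I$, and add a single rank-$1$ correction to install the $\pm\sqrt c$'s on the diagonal, explaining the additive ``$+1$'' in the bound.

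For $c > 2$, I build the factorization as a direct sum across $\lceil c\rceil$ blocks, each of dimension at most $\lceil\sqrt{2n}\rceil+1 \leq 2\lceil\sqrt n\rceil$. In each block $\ell$ I place rank-$1$ PSD factors $u_i^{(\ell)}(u_i^{(\ell)})^T$ where the normalized $u_i^{(\ell)}$ form $n$ equiangular unit vectors at angle $\arccos(1/\sqrt{\lceil c\rceil})$ and the norms $\|u_i^{(\ell)}\|^2$ are tuned so that $\sum_\ell \|u_i^{(\ell)}\|^4 = c$ and $\sum_\ell (u_i^{(\ell)} \cdot u_j^{(\ell)})^2 = 1$ for $i \neq j$. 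The PSD rank of a direct sum equals the sum of block PSD ranks, giving $\rprank(M_c) \leq \lceil c\rceil \cdot 2 \lceil\sqrt n\rceil$.

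The main technical obstacle is verifying that the required equiangular vector systems exist in dimension $\lceil\sqrt{2n}\rceil$ at the relevant angles --- classical for special values like $c = 2$ (giving $45^\circ$-equiangular lines), but requiring case analysis or slight dimension enlargement for arbitrary $c$. The secondary obstacle is the $c \in [0,1)$ sub-case, where the nonequality sign pattern must be combined with a diagonal correction in a rank-efficient way; the naive additive-decomposition attempts for $c>2$ fail because $\sum c_\ell \leq 2\sum b_\ell = 2$, so the real work is hiding the extra diagonal weight inside the per-block normalizations rather than in an outer partition.
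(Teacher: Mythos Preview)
Your approach has a genuine, fatal gap: the equiangular line systems you invoke simply do not exist in the required generality. Your rank-$1$ ansatz $E_i=F_i=u_iu_i^T$ forces $n$ equiangular lines in $\R^r$ at common angle $\arccos(1/\sqrt{c})$. Neumann's theorem (see Lemmens--Seidel) says that if more than $2r$ equiangular lines exist in $\R^r$ at angle $\arccos(1/a)$, then $a$ must be an odd integer. For every $c\in[0,2]\setminus\{1\}$ we have $\sqrt{c}\notin\{1,3,5,\ldots\}$, so $n\le 2r\approx 2\sqrt{2n}$, which fails for all $n>8$. Your direct-sum construction for $c>2$ inherits the same obstruction in each block (you need $n$ equiangular lines at angle $\arccos(1/\sqrt{\lceil c\rceil})$ in dimension $\approx 2\sqrt{n}$), so it too fails whenever $\lceil c\rceil$ is not an odd perfect square. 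The ``rank-$1$ diagonal correction'' for $c\in[0,1)$ is also broken: if $R=ab^T$ then $R_{ii}^2=c$ and $R_{ij}(2S_{ij}+R_{ij})=0$ force $|a_ib_j|\in\{0,2\}$ for $i\ne j$ but $|a_ib_i|=\sqrt{c}$; since no $a_i,b_j$ can vanish, all $|a_ib_j|=2$, which is incompatible with $|a_ib_i|=\sqrt{c}\ne 2$.

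The paper's proof avoids equiangular lines altogether by dropping the rank-$1$ restriction on the $F_j$. It builds a combinatorial design: $n$ subsets $S_1,\ldots,S_n\subseteq[r]$, each of size $k=\lceil c\rceil$, with $|S_x\cap S_y|\le 1$ for $x\ne y$. Such a family of size $q^2$ over a universe of size $kq$ comes from lines $y=ax+b$ over $\mathbb{F}_q$ restricted to $x\in[k]$; taking $q$ the least prime $\ge\lceil\sqrt{n}\rceil$ gives $r=kq\le 2\lceil c\rceil\lceil\sqrt{n}\rceil$. With $E_i$ the $r\times r$ matrix that is all-ones on $S_i\times S_i$ and zero elsewhere, and $F_i$ equal to $E_i$ with the remaining diagonal entries set to $1$, one computes $\Tr(E_xF_y)=|S_x\cap S_y|^2+|S_x\setminus S_y|$, which is $k^2$ if $x=y$ and $k$ otherwise; rescaling gives $M_c$ for integer $c$, and a further scaling of the off-diagonal entries of the $E_i$ handles non-integer $c$. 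For $c\in[0,2]$ one simply takes $k=2$ and all $2$-subsets of $[r]$, so $\binom{r}{2}\ge n$ as soon as $r=\lceil\sqrt{2n}\rceil+1$. The key point is that the $F_i$ have full rank $r$, so the ``equiangular'' rigidity you ran into never arises.
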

\begin{proof}
We first suppose $c$ is an integer larger than $2$. For a fixed
$r\geq c$, we consider the largest set $\mathcal{S}$ of subsets of
$[r]$ such that every subset has exactly $c$ elements and the
intersection of any two subsets contains at most one element in
$[r]$. Suppose the cardinality of $\mathcal{S}$ is $p(r,c)$, and the
elements of $\mathcal{S}$ are $\{S_1,S_2,\ldots,S_{p(r,c)}\}$, i.e.,
for any $i\in[p(r,c)]$, $S_i$ is a subset of $[r]$ with size $c$.

For any $i\in[p(r,c)]$, we now construct two $r$-by-$r$ matrices
$E_i$ and $F_i$ based on $S_i$ as follows. In $E_i$, we first choose
the submatrix whose row index set and column index set are $S_i$,
and let this submatrix be a $c$-by-$c$ all-one
matrix. All the other entries of $E_i$ are set to $0$.
$F_i$ is similar to $E_i$ except that all its diagonal
entries are $1$. Thus, for every $i$, both $E_i$ and $F_i$
are positive semidefinite.

It is not difficult to verify that for any $x,y\in[p(r,c)]$, if
$x=y$ then $\Tr(E_xF_y)=c^2$, and if $x\neq y$ then $\Tr(E_xF_y)=c$. That is,
if $p(r,c)\geq n$, then $\{\frac{1}{c}E_1,\ldots,\frac{1}{c}E_n\}$ and $\{F_1,\ldots,F_n\}$
form a size-$r$ PSD-factorization of $M_c$, which shows that
$\rprank(M_c)\leq r$.
%\begin{lemma}\label{lem:PsdRankOfMc}
%$p(r,c)\geq\frac{r^2}{c^4}$.
%\end{lemma}
We have the following lemma to provide bounds on $p(r,c)$.
\begin{lemma}
Let $c$ be a positive integer and $q$ be a prime number. There
exists a family of $q^2$ $c$-element sets over a universe of size
$cq$, such that any two distinct sets from this family intersect in
at most one point.
\end{lemma}

\begin{proof}
Since $q$ is a prime number, $\mathbb{F}_q$ is a finite field. With
each $(a,b)\in\mathbb{F}_q\times\mathbb{F}_q$ we associate the
following set in the universe $[c]\times\mathbb{F}_q$. It is a
$c$-element subset of the graph of the line $y=ax+b$.
$$
S_{ab}=\{(x,ax+b) : x\in[c]\}.
$$
We have $q^2$ such sets, one for each choice of $a,b$. Since two
distinct lines can intersect in at most one $(x,y)$-pair, we have
$|S_{ab}\cap S_{a'b'}|\leq 1$ if $a\neq a'$ and/or $b\neq b'$.
\end{proof}

Let us go back to the proof for \thmref{thm:PsdRankOfMc}.
Let $q$ be the smallest prime number $\geq \lceil\sqrt{n}\rceil$, then we know $q\leq 2\lceil\sqrt{n}\rceil$. Now by the above lemma there exist
$q^2\geq n$ $c$-element sets over a universe of size $cq$. This results
in a PSD-factorization for $M_c$ of size~$cq$, hence $\rprank(M_c)\leq cq\leq 2c\cdot\lceil\sqrt{n}\rceil$.

We now turn to the case that $c>2$ and $c$ is not an integer.
Firstly, we construct the PSD-factorization for $M_{\lceil c\rceil}$
as above. Then we replace all the nonzero off-diagonal entries of
the $E_i$'s (which are $1$'s) by $a=\frac{c-1}{\lceil c\rceil-1}$, and
obtain $E'_i$'s. Now $\{E'_1,\ldots,E'_n\}$ and
$\{F_1,\ldots,F_{n}\}$ form a PSD-factorization for $M_c$.

Finally, in order to settle the case that $c\in[0,2]$, we first
focus on the special case that $c=2$. It is easy to see that in this
case, $p(r,c)=\frac{1}{2}r(r-1)$. Thus if we choose
$r=\lceil\sqrt{2n}\rceil+1$, it holds that $p(r,c)\geq n$, and we
have $\rprank(M_2)\leq\lceil\sqrt{2n}\rceil+1$. When
$c\in[0,2)$, we replace all the nonzero off-diagonal entries of
the $E_i$'s (which are $1$'s) by $c-1$, and obtain $E'_i$'s.
It can be verified that $\{E'_1,\ldots,E'_n\}$ and
$\{F_1,\ldots,F_{n}\}$ form a valid PSD-factorization for $M_c$.
\end{proof}

We now consider a more general approximation of the identity than
$M_c$, where the diagonal entries do not have to be 1, and the
off-diagonal entries do not have to be equal. Alon~\cite{Alo09}
proved:

\begin{theorem}(\cite{Alo09})\label{thm:Alon}
There exists an absolute positive constant $c$ so that the following
holds. Let $A=[a(i,j)]$ be an $n$-by-$n$ real matrix with
$|a(i,i)|\geq1/2$ for all $i$ and $|a(i,j)|\leq \epsilon$ for any
$i\neq j$, where $\frac{1}{2\sqrt{n}}\leq\epsilon\leq 1/4$. Then the
rank of $A$ satisfies
\[
\rank(A)\geq\frac{c\log{n}}{\epsilon^2\log{(1/\epsilon)}}.
\]
\end{theorem}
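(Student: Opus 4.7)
The plan is to bootstrap the weaker bound $\rank(M)\geq n/(1+\delta^2(n-1))$ for a matrix $M$ with unit diagonal and $\delta$-small off-diagonal entries (which is stated just before \thmref{thm:LowerBoundforApproxId}) by a polynomial/Hadamard-product amplification. Write $A(i,j)=\langle u_i,v_j\rangle$ with $u_i,v_j\in\R^r$, where $r=\rank(A)$. For any univariate real polynomial $p$ of degree $d$, the entrywise-evaluated matrix $p(A)$, with $(i,j)$-entry $p(A(i,j))=p(\langle u_i,v_j\rangle)$, has rank at most $\binom{r+d}{d}$, because $p(\langle u_i,v_j\rangle)$ is an inner product of tensor powers of $u_i$ and $v_j$ in a space of that dimension. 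So if we can design $p$ of small degree that sends $[-\epsilon,\epsilon]$ to a tiny range and the diagonal values $a(i,i)\in[1/2,\infty)$ to something much larger, then after a diagonal rescaling we will have produced an approximate identity whose rank we can bound from below.

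For the polynomial I would use a Chebyshev-type construction: set $p(x)=T_d(x/\epsilon)$, where $T_d$ is the degree-$d$ Chebyshev polynomial of the first kind. On $[-\epsilon,\epsilon]$ one has $|p(x)|\leq 1$, while for $|x|\geq 1/2$ the standard estimate gives $|p(x)|\geq \tfrac{1}{2}(1/(2\epsilon)+\sqrt{1/(4\epsilon^2)-1})^d\geq c_0(1/\epsilon)^d$ for an absolute constant $c_0$, provided $\epsilon\leq 1/4$. Thus $p(A)$ has diagonal entries of absolute value at least $c_0(1/\epsilon)^d$ and off-diagonal entries of absolute value at most $1$. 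Pre- and post-multiplying by appropriate real diagonal matrices (which does not change the rank) produces a matrix $B$ with $B(i,i)=1$ and $|B(i,j)|\leq c_0^{-1}\epsilon^d$ for $i\neq j$.

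Now I apply the symmetric approximate-identity rank bound to $B$ with $\delta=c_0^{-1}\epsilon^d$, yielding
\[
\binom{r+d}{d}\ \geq\ \rank(B)\ \geq\ \frac{n}{1+c_0^{-2}\epsilon^{2d}(n-1)}.
\]
Choosing $d=\bigl\lceil\log(4n)/(2\log(1/\epsilon))\bigr\rceil$ makes the denominator at most $2$, so $\binom{r+d}{d}\geq n/2$. Using $\binom{r+d}{d}\leq (e(r+d)/d)^d$ and taking $d$-th roots, $(n/2)^{1/d}\leq e(r+d)/d$; since $\log(n/2)/d=\Theta(\log(1/\epsilon))$ by our choice, $(n/2)^{1/d}=\Theta(1/\epsilon^2)$. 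Rearranging gives $r=\Omega(d/\epsilon^2)=\Omega(\log n/(\epsilon^2\log(1/\epsilon)))$, which is exactly the claimed bound. The lower constraint $\epsilon\geq 1/(2\sqrt{n})$ is what guarantees $d\geq 1$ and that the construction is non-vacuous.

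The main obstacle is getting the polynomial extremal estimate sharp: a naive choice such as $p(x)=x^d$ would lose a factor of $2^d$ (because the diagonal values grow only as $(1/2)^d$ rather than $(1/\epsilon)^d$) and ultimately give only $\Omega(\log n/\log(1/\epsilon))$. It is the Chebyshev growth rate $(1/\epsilon)^d$ on the complement of $[-\epsilon,\epsilon]$ that produces the extra factor of $1/\epsilon^2$, and verifying that lower bound on $T_d$ carefully (together with the sign-ambiguity of the diagonal entries, which I would handle by first flipping signs to assume $a(i,i)\geq 1/2$) is the delicate step. All other parts—the bound on the rank of an entrywise polynomial, the diagonal rescaling, and the final optimization of $d$—are routine.
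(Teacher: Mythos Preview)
The paper does not supply a proof of this statement: \thmref{thm:Alon} is quoted verbatim from Alon~\cite{Alo09} and used as a black box to derive \thmref{thm:PSDOnAlon}. So there is no ``paper's proof'' to compare against.

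That said, your proposal is correct and is in fact the argument Alon gives. The three ingredients---the rank bound $\rank(p\circ A)\le\binom{r+d}{d}$ for an entrywise degree-$d$ polynomial, the Chebyshev amplification $p(x)=T_d(x/\epsilon)$ which is bounded by $1$ on $[-\epsilon,\epsilon]$ and grows like $(c/\epsilon)^d$ at $|x|\ge 1/2$, and the choice $d\asymp \log n/\log(1/\epsilon)$---combine exactly as you describe. One small point to tidy: the ``approximate identity'' rank bound quoted in the paper just before \thmref{thm:LowerBoundforApproxId} is stated for \emph{nonnegative} off-diagonal entries, whereas your $B$ can have signed off-diagonal entries; but the same inequality $\rank(B)\ge|\Tr B|^2/\|B\|_F^2\ge n/(1+\delta^2(n-1))$ holds for any real $B$ with unit diagonal and $|B(i,j)|\le\delta$, since $|\Tr B|\le\|B\|_{tr}$ and $\|B\|_{tr}^2\le\rank(B)\,\|B\|_F^2$. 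With that remark your outline goes through without further obstacles.
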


Combining the above theorem and \factref{ft:trivial}, we immediately obtain that
\begin{theorem}\label{thm:PSDOnAlon}
There exists an absolute positive constant $c$ so that the following
holds. Let $A=[a(i,j)]$ be an $n$-by-$n$ real matrix with
$|a(i,i)|\geq1/2$ for all $i$ and $|a(i,j)|\leq \epsilon$ for any
$i\neq j$, where $\frac{1}{2\sqrt{n}}\leq\epsilon\leq 1/4$. Then the
PSD-rank of $A$ satisfies
\[
\prank(A)\geq\frac{c\sqrt{\log{n}}}{\epsilon\sqrt{\log{(1/\epsilon)}}}.
\]
\end{theorem}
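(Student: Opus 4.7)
The plan is to combine the two ingredients flagged in the paragraph preceding the theorem: the trivial lower bound $\prank(A) \geq \sqrt{\rank(A)}$ from \factref{ft:trivial}, together with Alon's rank lower bound from \thmref{thm:Alon}. Since the hypotheses of this theorem are exactly the hypotheses of \thmref{thm:Alon}, the latter applies and produces an $\rank(A) \geq c' \log(n) / (\epsilon^2 \log(1/\epsilon))$ bound for some absolute constant $c' > 0$.

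The next step is simply to take square roots. Applying \factref{ft:trivial} gives
\[
\prank(A) \;\geq\; \sqrt{\rank(A)} \;\geq\; \sqrt{\frac{c' \log n}{\epsilon^2 \log(1/\epsilon)}} \;=\; \frac{\sqrt{c'}\,\sqrt{\log n}}{\epsilon \sqrt{\log(1/\epsilon)}},
\]
so the theorem holds with $c = \sqrt{c'}$. Note that although \thmref{thm:Alon} is stated for real matrices, we are not in any way restricted to real PSD-factorizations: the quantity $\rank(A)$ is field-independent for real matrices, and $B_1(A) = \sqrt{\rank(A)}$ lower-bounds the (complex) PSD-rank directly.

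There is essentially no obstacle here; the only thing to verify is that the hypotheses line up, which they do verbatim. A minor remark worth including is that the bound is meaningful precisely in the regime $\epsilon \in [1/(2\sqrt{n}), 1/4]$ of \thmref{thm:Alon}: at the lower end $\epsilon \approx 1/\sqrt{n}$ one recovers a bound of order $\sqrt{n \log n / \log n} = \sqrt{n}$, matching the trivial $B_1$ bound for the identity up to logarithmic factors, while for small $\epsilon$ the bound grows as $1/\epsilon$ up to $\sqrt{\log}$ factors, which is qualitatively analogous to the $1/\epsilon^2$ rank lower bound of Alon but with the PSD-rank square-root loss baked in.
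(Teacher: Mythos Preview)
Your proposal is correct and follows exactly the same approach as the paper: the paper states that the result is obtained ``immediately'' by combining \thmref{thm:Alon} with \factref{ft:trivial}, which is precisely the square-root-of-rank argument you wrote out.
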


We do not know if this lower bound on PSD-rank is tight. It is not hard to show that the \emph{nonnegative} rank of approximations of the $n$-by-$n$ identity matrix is $O(\log n)$ for constant~$\eps$.  For example, we can take a set of $n$ random $\ell$-bit words $C_1,\ldots,C_n\in\01^\ell$. For $\ell=c\log n$ and $c$ a sufficiently large constant, $\braket{C_i}{C_j}$ will be close to $\ell/2$ for all $i=j$ and close to $\ell/4$ for all $i\neq j$. Hence if we associate both the $i$th row and the $i$th column with the $\ell$-dimension vector $\sqrt{\frac{2}{\ell}}C_i$, we get an $\ell=O(\log n)$-dimensional nonnegative factorization of an approximation of the identity.

\subsection{The inner product function}

Let $x,y\in\01^n$ be two $n$-bit strings. The inner product
function is defined as $\IP(x,y)=\sum^n_{i=1}x_iy_i \bmod 2$. We denote the
corresponding $N$-by-$N$ matrix by $\IP_n$, where
$N=2^n$. We have the following theorem.
\begin{theorem}\label{thm:IPExpectProof}
$\prank(\IP_n)\leq 2\sqrt{N}$.
\end{theorem}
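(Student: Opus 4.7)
The plan is to apply \lemref{lem:1D}: it suffices to exhibit a complex matrix $M$ satisfying $M \circ \overline{M} = \IP_n$ entrywise and having rank at most $2\sqrt{N}$.

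The key observation is the elementary identity $a \oplus b = (a-b)^2$, valid for $a, b \in \{0,1\}$. Assume $n$ is even and write $n = 2m$; split each input as $x = (x',x'')$ and $y = (y',y'')$ with $x',x'',y',y'' \in \{0,1\}^m$. Then $\IP_n(x,y) = \IP_m(x',y') \oplus \IP_m(x'',y'')$, and the identity gives
\[
\IP_n(x, y) = \bigl(\IP_m(x', y') - \IP_m(x'', y'')\bigr)^2.
\]
This suggests taking the real matrix $M(x,y) := \IP_m(x',y') - \IP_m(x'',y'')$, which in block form is $M = \IP_m \otimes J_m - J_m \otimes \IP_m$, where $J_m$ denotes the $2^m \times 2^m$ all-ones matrix. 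Since $M$ is real, $M \circ \overline{M} = M^{\circ 2} = \IP_n$, so \lemref{lem:1D} gives $\prank(\IP_n) \le \rank(M)$.

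To bound the rank, observe that $J_m$ has rank $1$, so each of the two summands $\IP_m \otimes J_m$ and $J_m \otimes \IP_m$ has rank equal to $\rank(\IP_m) \le 2^m = \sqrt{N}$; by subadditivity of rank under sums, $\rank(M) \le 2\sqrt{N}$, completing the argument. (For odd $n$ the same construction with split sizes $\lceil n/2\rceil$ and $\lfloor n/2\rfloor$ yields the matching bound up to a small constant factor.) The step I expect to be the trickiest to spot is the identity $a \oplus b = (a-b)^2$: more natural candidates for $M$, such as $\sin(\pi (x\cdot y)/2)$, satisfy $|M|^2 = \IP_n$ pointwise but have rank of order $N$ via the tensor-product decomposition of $i^{x\cdot y}$, so trading away the complex construction in favor of the algebraic XOR-as-squared-difference identity is what delivers the $\sqrt{N}$ savings.
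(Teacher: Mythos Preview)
Your argument is correct. For even $n$ it delivers exactly the stated bound: $M=\IP_m\otimes J_m-J_m\otimes\IP_m$ is real with $M\circ M=\IP_n$ via $a\oplus b=(a-b)^2$, and $\rank(M)\le 2\rank(\IP_m)\le 2\cdot 2^m=2\sqrt{N}$ (in fact $2\sqrt{N}-2$, since $\rank(\IP_m)=2^m-1$). Your parenthetical remark about odd $n$ gives the constant $\tfrac{3}{2}\sqrt{2}$, which matches what the paper obtains in that case as well.

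The route, however, is not the one the paper uses to prove this particular theorem. The paper's proof of \thmref{thm:IPExpectProof} builds a one-way quantum protocol: \alice\ sends the $(n/2+1)$-qubit superposition $\tfrac{1}{\sqrt{2}}(\ket{0,x_0}+\ket{1,x_1})$, and \bob\ uses phase kickback plus Hadamards to output $W(x,y)$ in expectation; the PSD-rank bound then comes from the equivalence between $\prank$ and quantum communication of~\cite{FMP+12}. That argument is more general---it applies to any $0/1$ matrix of the form $f(x_0,y)\oplus g(x_1,y)$---but is less self-contained, since it imports the protocol/PSD-rank correspondence.

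Your approach is essentially the paper's \emph{second} proof, which immediately follows \thmref{thm:IPExpectProof} and also invokes \lemref{lem:1D}. There the matrix $M_n$ is described block-by-block (replace each $\IP_k$-block outside the first block row by $-\IP_k$, keep the $J_k-\IP_k$ blocks), and the rank is bounded by a row-reduction argument. Your formulation via $a\oplus b=(a-b)^2$ and the tensor decomposition $\IP_m\otimes J_m-J_m\otimes\IP_m$ is the same construction up to a global sign on the first block row, but packaged more transparently: the rank bound drops out of subadditivity with no row operations. The paper's block-row argument also bounds $\rprank$ and handles the odd-$n$ constant explicitly, which your sketch defers to a remark, but there is no substantive difference in content.
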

\begin{proof}
We will design a one-way quantum protocol to compute $\IP_n$ in expectation
and then invoke the equivalence between $\prank$ and communication
complexity mentioned in \secref{ssecquantum}. We will actually prove the bound for
more general $0/1$-matrices, of which $\IP_n$ is a special case.
Let $W$ be an $N$-by-$N$ $0/1$-matrix,
with rows and columns indexed by $n$-bit strings $x$ and $y$
respectively. View $x=x_0x_1$ as concatenation of two $n/2$-bit
strings $x_0$ and $x_1$. Suppose there exist two Boolean functions
$f,g:\01^{n/2+n}\rightarrow\01$ such that
$W(x,y)=f(x_0,y) \oplus g(x_1,y)$. Then $\IP_n$ is a special case of
such a~$W$, where $f(x_0,y)=\IP(x_0,y_0)$ and $g(x_1,y)=\IP(x_1,y_1)$.
We now show there exists a one-way quantum protocol that computes
$W$ in expectation and whose quantum communication complexity is at
most $n/2+1$ qubits. This implies $\prank(W)\leq 2^{n/2+1}=2\sqrt{N}$.

For any input $x$, \alice sends the following state of $1+n/2$
qubits to \bob:
$$
\ket{\psi_x}=\frac{1}{\sqrt{2}}(\ket{0,x_0}+\ket{1,x_1}).
$$
Then by a unitary operation, \bob turns the state into
\[
\ket{\psi_{xy}}=\frac{1}{\sqrt{2}}((-1)^{f(x_0,y)}\ket{0,x_0}+(-1)^{g(x_1,y)}\ket{1,x_1}).
\]
\bob then applies the Hadamard gate to the last $n/2$ qubits and
measures those in the computational basis. If he gets any
outcome other than $0^{n/2}$, he outputs $0$. With probability $1/\sqrt{2^n}$
 he gets outcome $0^{n/2}$, and then the first qubit will have become
$\frac{1}{\sqrt{2}}((-1)^{f(x_0,y)}\ket{0} + (-1)^{g(x_1,y)}\ket{1})$. By another Hadamard
gate and a measurement in the computational basis, \bob learns the
bit $f(x_0,y)\oplus g(x_1,y)=W(x,y)$. Then he outputs that bit
times $\sqrt{2^n}$. The expected value of the output is
$\frac{1}{\sqrt{2^n}}\cdot(W(x,y)\cdot\sqrt{2^n})=W(x,y)$.
\end{proof}

We give another proof of this theorem by explicitly providing a
PSD-factorization for $\IP_n$.
Note that the factors in the following PSD-factorization are rank-1 real matrices.
\begin{theorem}
$\rprank(\IP_n)\leq c\sqrt{N}$. If $n$ is even, $c=2$, and if $n$ is
odd, $c=\frac{3}{2}\sqrt{2}$.
\end{theorem}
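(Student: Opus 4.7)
The plan is to give an explicit PSD-factorization in which every factor is a rank-one real symmetric PSD matrix, i.e.\ $E_x = v_x v_x^T$ and $F_y = w_y w_y^T$ for real vectors $v_x, w_y$. Since $\Tr(E_x F_y) = \langle v_x, w_y\rangle^2$ and $\IP_n(x,y)\in\{0,1\}$, the task reduces to producing real vectors with $\langle v_x, w_y\rangle \in \{0,\pm 1\}$ and $\langle v_x, w_y\rangle^2 = \IP_n(x,y)$. The morally correct thing to do is to read off such vectors directly from the one-way quantum protocol of \thmref{thm:IPExpectProof}: Alice's state amplitudes give $v_x$, and the $\pm 1$ phases Bob introduces give $w_y$.

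For even $n$, I would split $x = (x_0,x_1)$ and $y = (y_0,y_1)$ into halves of $n/2$ bits each, so that $\IP_n(x,y) = \IP(x_0,y_0) \oplus \IP(x_1,y_1)$. Work in $\R^r$ with $r = 2 \cdot 2^{n/2} = 2\sqrt N$ and coordinates indexed by pairs $(b,z)$ with $b\in\01$ and $z\in\01^{n/2}$. Let $e_{(b,z)}$ be the standard basis vector at coordinate $(b,z)$, and set
\[
v_x \;=\; \tfrac{1}{\sqrt 2}\bigl(e_{(0,x_0)} + e_{(1,x_1)}\bigr),
\]
while $w_y$ has entries $\tfrac{1}{\sqrt 2}(-1)^{\IP(z,y_0)}$ at position $(0,z)$ and $-\tfrac{1}{\sqrt 2}(-1)^{\IP(z,y_1)}$ at position $(1,z)$. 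A direct computation gives
\[
\langle v_x, w_y\rangle \;=\; \tfrac{1}{2}\bigl((-1)^{\IP(x_0,y_0)} - (-1)^{\IP(x_1,y_1)}\bigr),
\]
which is $0$ when $\IP(x_0,y_0) = \IP(x_1,y_1)$ and $\pm 1$ otherwise; squaring yields exactly $\IP_n(x,y)$. This gives the $c=2$ bound.

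For odd $n = 2k+1$ the same template works, but with the asymmetric split $|x_0| = k$, $|x_1| = k+1$ (and likewise for $y$), which still satisfies $\IP_n(x,y) = \IP(x_0,y_0) \oplus \IP(x_1,y_1)$. The ambient space now has dimension $2^k + 2^{k+1} = 3\cdot 2^k$, and since $\sqrt N = 2^k\sqrt 2$ this equals $\tfrac{3}{\sqrt 2}\sqrt N = \tfrac{3}{2}\sqrt 2\cdot\sqrt N$, matching the claimed constant. The inner-product computation is unchanged modulo the block sizes.

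There is no real obstacle here beyond the dimension bookkeeping: the key creative step is spotting the unbalanced split $k, k+1$ in the odd case, which shaves the constant from $2$ to $\tfrac{3}{2}\sqrt 2$ by avoiding the wasted dimensions that would come from padding both halves to length $\lceil n/2\rceil$.
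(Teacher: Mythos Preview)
Your proof is correct. Both your argument and the paper's second proof ultimately rest on the same idea (\lemref{lem:1D}): build a real matrix $M$ with $M(x,y)^2=\IP_n(x,y)$ and bound its rank. The paper realizes $M$ block-recursively, replacing each $\IP_k$ block outside the first block row by $-\IP_k$, and then bounds $\rank(M)$ by row-reduction, obtaining $2^k-1+N/2^k$; optimizing $k$ gives $2\sqrt N-1$ (even $n$) or $\tfrac{3}{2}\sqrt{2}\,\sqrt N-1$ (odd $n$). You instead read off the rank-one factors directly from the quantum protocol of \thmref{thm:IPExpectProof}, which makes the construction more transparent and immediately exhibits the correct dimension $2^{k}+2^{n-k}$ without any row manipulation; the unbalanced split for odd $n$ is exactly the choice $k=\lfloor n/2\rfloor$ in the paper's optimization. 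Your bound is one larger than the paper's in each case, but the stated theorem only asks for $c\sqrt N$, which you meet.
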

\begin{proof}
For any $k$ we have $\displaystyle
\IP_{k+1}=
\begin{bmatrix}
\IP_k & \IP_k \\
\IP_k & J_k-\IP_k
\end{bmatrix}$,
where $J_k$ is the $k$-by-$k$ all one matrix.
Using this relation twice, we have that
\[
\IP_{k+2}=
\begin{bmatrix}
\IP_k & \IP_k & \IP_k & \IP_k \\
\IP_k & J_k-\IP_k & \IP_k & J_k-\IP_k\\
\IP_k & \IP_k & J_k-\IP_k & J_k-\IP_k \\
\IP_k & J_k-\IP_k & J_k-\IP_k & \IP_k
\end{bmatrix}.
\]
Repeating this procedure, it can be seen that
$\IP_n$ can be expressed as a block matrix with each block being
$\IP_k$ or $J-\IP_k$ for some $k<n$ to be chosen later.
We now consider a new block matrix $M_n$ with
the same block configuration as $\IP_n$ generated as follows. The
blocks in the first block row of $M_n$ are the same as $\IP_n$,
that is they are $\IP_k$'s. In the rest of the block rows, if a block of
$\IP_n$ is $\IP_k$, then we choose the corresponding block of $M_n$ to
be $-\IP_k$, and if a block of $\IP_n$ is $J_k-\IP_k$, the corresponding
block of $M_n$ is also $J_k-\IP_k$. It is not difficult to check that
$M_n\circ \overline M_n = \IP_n$, and since $M_n$ is real, we have that
$\rprank(\IP_n)\leq\rank(M_n)$.

In order to upper bound the rank of $M_n$, we add its first block
row to the other block rows, and obtain another matrix $M'_n$, with
the same rank as $M_n$, in which all the blocks are $0$ or $J_k$
except those in the first row are still $\IP_k$'s. Since the rank of
$M'_n$ can be upper bounded by the sum of the rank of the
first block row and that of the remaining block rows, we have that
\[
\rprank(\IP_n)\leq\rank(M_n)=\rank(M'_n)\leq2^k-1+\frac{N}{2^k},
\]
where $2^k-1$ comes from the rank of $\IP_k$, and $\frac{N}{2^k}$
comes from the number of blocks in every row of $M'_n$. If $n$ is
even, we choose $k=n/2$, and the inequality above is
$\rprank(\IP_n)\leq2\sqrt{N}-1$. If $n$ is odd, we choose
$k=(n+1)/2$, and the inequality becomes
$\rprank(\IP_n)\leq(\frac{3}{2}\sqrt{2})\sqrt{N}-1$.
\end{proof}

\paragraph{Acknowledgments.}
We would like to thank Rahul Jain for helpful discussions, and Hamza Fawzi, Richard Robinson, and Rekha Thomas
for sharing their results on the derangement matrix. Troy Lee and Zhaohui Wei are supported in part by
the Singapore National Research Foundation under NRF RF Award No.~NRF-NRFF2013-13.
Ronald de Wolf is partially supported by ERC Consolidator Grant QPROGRESS and by the EU STREP project QALGO (Grant agreement no.~600700).

\bibliographystyle{alpha}
\bibliography{nnegrk.bib}

\newcommand{\etalchar}[1]{$^{#1}$}
\begin{thebibliography}{FMP{\etalchar{+}}12}

\bibitem[Alo09]{Alo09}
N.~Alon.
\newblock Perturbed identity matrices have high rank: proof and applications.
\newblock {\em Combinatorics, Probability, and Computing}, 18:3--15, 2009.

\bibitem[BP]{BP14}
G.~Braun and S.~Pokutta.
\newblock personal communication.

\bibitem[BTN01]{BentalNemirovski01}
A.~Ben-Tal and A.~Nemirovski.
\newblock On polyhedral approximations of the second-order cone.
\newblock {\em Math. Oper. Res.}, 26(2):193--205, 2001.

\bibitem[CFFT12]{CFF+12}
M.~Conforti, Y.~Faenza, S.~Fiorini, and {H. R.} Tiwary.
\newblock Extended formulations, non-negative factorizations and randomized
  communication protocols.
\newblock In {\em 2nd International Symposium on Combinatorial Optimization},
  2012.
\newblock \href{http://arxiv.org/abs/1105.4127}{arXiv:1105.4127}.

\bibitem[FGP{\etalchar{+}}14]{FGPRT14}
H.~Fawzi, J.~Gouveia, P.~Parrilo, R.~Robinson, and R.~Thomas.
\newblock Positive semidefinite rank.
\newblock Technical Report arXiv:1407.4095, arXiv, 2014.

\bibitem[FMP{\etalchar{+}}12]{FMP+12}
S.~Fiorini, S.~Massar, S.~Pokutta, H.~R. Tiwary, and {R. de} Wolf.
\newblock Linear vs.\ semidefinite extended formulations: Exponential
  separation and strong lower bounds.
\newblock In {\em STOC}, 2012.

\bibitem[FP12]{FawziParrilo12}
H.~Fawzi and P.~Parrilo.
\newblock New lower bounds on nonnegative rank using conic programming.
\newblock Technical Report arXiv:1210.6970, arXiv, 2012.

\bibitem[GPT13]{GPT11}
J.~Gouveia, P.~Parrilo, and R.~Thomas.
\newblock Lifts of convex sets and cone factorizations.
\newblock {\em Mathematics of Operations Research}, 38(2):248--264, 2013.
\newblock \href{http://arxiv.org/abs/1111.3164}{arXiv:1111.3164}.

\bibitem[JSWZ13]{JSWZ12}
R.~Jain, Y.~Shi, Z.~Wei, and S.~Zhang.
\newblock Efficient protocols for generating bipartite classical distributions
  and quantum states.
\newblock {\em IEEE Transactions on Information Theory}, 59:5171--5178, 2013.

\bibitem[LT12]{LeeTheis12}
T.~Lee and D.~O. Theis.
\newblock Support based bounds for positive semidefinite rank.
\newblock Technical Report arXiv:1203.3961, arXiv, 2012.

\bibitem[NC00]{NC00}
M.~Nielsen and I.~Chuang.
\newblock {\em Quantum Computation and Quantum Information}.
\newblock Cambridge University Press, 2000.

\bibitem[Rot14]{Rot14}
T.~Rothvo{\ss}.
\newblock The matching polytope has exponential extension complexity.
\newblock In {\em Proceedings of the 46th ACM STOC}, 2014.

\bibitem[Yan91]{Yannakakis91}
M.~Yannakakis.
\newblock Expressing combinatorial optimization problems by linear programs.
\newblock {\em J. Comput. System Sci.}, 43(3):441--466, 1991.

\bibitem[Zha12]{Zha12}
S.~Zhang.
\newblock Quantum strategic game theory.
\newblock In {\em Proceedings of the 3rd Innovations in Theoretical Computer
  Science}, pages 39--59, 2012.

\end{thebibliography}

%\begin{thebibliography}{10}
%
%\bibitem{FMP+12}
%Samuel Fiorini, Serge Massar, Sebastian Pokutta, Hans~Raj Tiwary,
%and Ronald
%  de~Wolf.
%\newblock Linear vs. semidefinite extended formulations: Exponential separation
%  and strong lower bounds.
%\newblock In \emph{Proceedings of the 44th ACM Symposium on Theory of
%  Computing}, pages 95--106, 2012.
%
%\bibitem{Yan88}
%Mihalis Yannakakis.
%\newblock Expressing combinatorial optimization problems by linear programs.
%\newblock In \emph{Proceedings of the 20th Annual ACM Symposium on Theory of
%  Computing}, pages 223--228, 1988.
%
%\bibitem{GPT11}
%Jo{\~{a}}o Gouveia, Pablo~A. Parrilo, and Rekha Thomas.
%\newblock Lifts of convex sets and cone factorizations.
%\newblock \emph{arXiv:1111.3164}, 2011.
%
%\bibitem{JSWZ12}
%Rahul Jain, Yaoyun Shi, Zhaohui Wei and Shengyu Zhang.
%\newblock {Efficient protocols for generating bipartite classical distributions and quantum states}.
%\newblock \emph{IEEE Transactions on Information Theory}, 59(8):5171-5178, 2013.
%
%\bibitem{NC00}
%Michael Nielsen and Isaac Chuang.
%\newblock \emph{Quantum Computation and Quantum Information}.
%\newblock Cambridge University Press, Cambridge, UK, 2000.
%
%\bibitem{Zha12}
%Shengyu Zhang.
%\newblock Quantum strategic game theory.
%\newblock In \emph{Proceedings of the 3rd Innovations in Theoretical Computer
%  Science}, pages 39--59, 2012.
%\newblock Earlier at arXiv:1012.5141 and QIP'11.
%
%\bibitem{FP12}
%Hamza Fawzi and Pablo A. Parrilo.
%\newblock New lower bounds on nonnegative rank using conic programming.
%\newblock \emph{arXiv:1210.6970}, 2012.
%
%\end{thebibliography}

\appendix

\section{Proof of \thmref{thm:realvscomplex}}

It is trivial that $\prank(A)\leq\rprank(A)$, so we only need to
prove the second inequality. Suppose $r=\prank(A)$, and
$\{E_k\}$ and $\{F_l\}$ are a size-optimal PSD-factorization of $A$.
We now separate all the matrices involved into their real and
imaginary parts. Specifically, for any $k$ and $l$, let
$E_k=C_k+i\cdot D_k$, and $F_l=G_l+i\cdot H_l$, where $C_k$ and
$G_l$ are real symmetric matrices, and $D_k$ and $H_l$ are real
skew-symmetric matrices (i.e., $D_k^T=-D_k$ and $H_l^T=-H_l$).
Then it holds that
\[
A_{kl}=\Tr(E_kF_l)=(\Tr(C_kG_l)-\Tr(D_kH_l))+i\cdot(\Tr(D_kG_l)+\Tr(C_kH_l)).
\]
%Note that $\Tr(D_kG_l)=\Tr((D_kG_l)^T)=-\Tr(G_lD_k)=-\Tr(D_kG_l)$, where we used $G_l$ is
%symmetric, $D_k$ is skew-symmetric, and the trace is cyclic.
%Hence $\Tr(D_kG_l)=0$.  Similarly, $\Tr(C_kH_l)=0$.
%Therefore, we have
Since $A_{kl}$ is real, we in fact have
\[
A_{kl}=\Tr(C_kG_l)-\Tr(D_kH_l).
\]
Now for any $k$ and $l$, define new matrices as follows:
$S_k=\frac{1}{\sqrt{2}}
\begin{bmatrix}
C_k & D_k \\
-D_k & C_k
\end{bmatrix}$, and $T_l=
\frac{1}{\sqrt{2}}\begin{bmatrix}
G_l & H_l \\
-H_l & G_l
\end{bmatrix}$.
Then $S_k$ and $T_l$ are real symmetric matrices, and $\Tr(S_kT_l)=\Tr(C_kG_l)-\Tr(D_kH_l)=A_{kl}$.

It remains to show that the matrices $S_k$ and $T_l$ are positive semidefinite.
Suppose $u=\begin{bmatrix}
v_1\\
v_2
\end{bmatrix}$
is a $2r$-dimensional real vector, where $v_1$ and $v_2$ are two arbitrary $r$-dimensional real vectors. Starting from the fact that $E_k$ is positive semidefinite, we have
\[
0\leq (v_2^T-i\cdot v_1^T)E_k(v_2+i\cdot
v_1)=v_1^TC_kv_1-v_2^TD_kv_1+v_1^TD_kv_2+v_2^TC_kv_2=\sqrt{2} u^TS_ku.
\]
Hence $S_k$ is positive semidefinite. Similarly we can show that $T_l$ is positive semidefinite for every $l$.

\end{document}